\def\bfm#1{\protect{\makebox{\boldmath $#1$}}}
\def\bfm#1{\protect{\makebox{\boldmath $#1$}}}
\def\z {\bfm{z}}
\def\x {\bfm{x}}
\def\op{\mbox{ op }}
\newcommand{\R}{\mathbb{R}}
\def\bfm#1{\protect{\makebox{\boldmath $#1$}}}
\newcommand\ForAuthors[1]
\title{Guaranteed approximations of arbitrarily quantified reachability problems}
\titlerunning{Arbitrary quantified reachability problems}
\author{Eric Goubault and Sylvie Putot}
\institute{LIX, Ecole Polytechnique, CNRS and IP-Paris, 91128 Palaiseau Cedex, France}
\begin{document}

\maketitle

\begin{abstract}
We propose an approach to compute inner and outer-approxi\-mations of the sets of values satisfying constraints expressed as arbitrarily quantified formulas. Such formulas arise for instance when specifying  important problems in control such as robustness, motion planning or controllers comparison. We propose an interval-based method which allows for tractable but tight approximations. We demonstrate its applicability through a series of examples and  benchmarks using a prototype implementation. 

\keywords{Reachability, Quantified problems, Inner-approxi\-mations}
\end{abstract}

\section{Introduction}

We consider the problem of computing inner and outer approximations of  sets of reachable states constrained by arbitrarily quantified formulas. Although this can be applied to a number of computer science and verification problems, we focus here on quantified formulas that arise in control and validation. 
Controlled systems are usually subject to disturbances, and are defined by the flow $\varphi(t;x_0,u,w)$ at time $t$,  for any initial state $x_0$, control $u$ and disturbance $w$. 
Robust reachability in the sense of \cite{hscc2019} is defined as computing, for time $t \in [0,T]$, a set such as $R_{\forall \exists}(\varphi)(t)$: 
\begin{equation*}
R_{\forall \exists}(\varphi)(t)=\{ z \ \mid \ \forall w \in \mathbb{W}, \ \exists x_0 \in \mathbb{X}_0, \ \exists u \in \mathbb{U},  \\ z=\varphi(t;x_0,u,w)\}
\end{equation*}
\noindent and solves the problem of knowing whether a controller can compensate disturbances or change of values of parameters that are known to the controller. This is an example of the quantified reachability problems  targeted in this work. 


In classical robust control, the problem can be different and consider the existence of a controller leading to a target robustly whatever the disturbances in a given set. In this case, we may need to relax the problem to find a non-empty solution, for instance by some tolerance in time or space on reaching the target. This leads to more complex quantified problems of the form, for example here with a relaxation in time: 
\begin{multline}
\label{eq:robtime}
   R_{\exists \forall \exists}(\varphi)= \{ z \in \R^m \mid \exists u \in \mathbb{U}, \ \exists x_0 \in \mathbb{X}_0, \ \forall w \in {\mathbb W}, \ \exists s \in [0,T] \\ z=\varphi(s;x_0,u,w)\}
    \end{multline}
This generalization  is one of the motivations of the work described hereafter, that considers arbitrary alternations of quantifiers.  We discuss in Section \ref{sec:qrp} other problems in control requiring such alternations, among which motion planning problems and problems specified by hyperproperties such as robustness or comparisons of controllers. 

\subsection{Problem statement}
\label{sec:pb_statement}
Let $f$ be a function from $\R^p$ to $\R^m$, which can be a flow function $\varphi$ as above, a discrete-time dynamical system etc.
We suppose the $p$ arguments of $f$ are partitioned into consecutive $j_i$ arguments
$i=1,\ldots,2n$ corresponding to the alternations of quantifiers, with $p=\sum\limits_{i=1}^{2n} j_i$. This partition, identified with the sequence $(j_1,\ldots,j_{2n})$ is denoted by ${\bfm p}$. 
For simplicity's sake, we will note 
${\bfm x}_i =  (x_{k_i+1},\ldots,x_{k_{i+1}}) $ 
where $k_i$ stands for $\sum\limits_{l=1}^{i-1} j_l$, $i=1,\ldots, 2n+1$, and 
$f(x_1,x_2,\ldots,$ $x_{k_{2n}})=f({\bfm x}_1,\ldots,{\bfm x}_{2n})$. 

We consider the general quantified problems, with $n$ alternations of quantifiers $\forall \exists$, of finding
$R_{\bfm p}(f)$ defined as:
\begin{multline}
R_{\bfm p}(f)=\left\{ z \in \R^m \mid \forall \bfm{x}_1 \in [-1,1]^{j_1}, \ \exists \bfm{x}_2\in [-1,1]^{j_2}, \ \ldots,\right.\\
\left . \forall \bfm{x}_{2n-1} \in [-1,1]^{j_{2n-1}}, \exists \bfm{x_{2n}}
\in [-1,1]^{j_{2n}}, \ z=f(\bfm{x}_1,\bfm{x}_2,\ldots,\bfm{x}_{{2n}})\right\}
\label{eq:Rf}
\end{multline}

\begin{remark}
Note that this formulation does not prevent us from  considering a formula starting with an existential quantifier (nor one finishing with a universal quantifier): formally this can be done by adding a universal quantifier at the start of the sequence of quantifiers, quantifying over a dummy variable. 
\end{remark}
When only few quantifier alternations are involved, we will use the notations  $R_{\forall}(f)$, $R_{\exists}(f)$, 
$R_{\forall \exists}(f)$, $R_{\exists \forall}(f)$ etc. instead of $R_{\bfm p}(f)$, for brevity.

\begin{remark}
Problem (\ref{eq:Rf}) naturally also includes, up to reparametrization, quantified problems with other boxes than $[-1,1]^{j_i}$. 
It is also possible to consider more general sets over which to quantify variables $x_i$. As shown in Proposition \ref{prop:nonlin}, any outer-approximation (resp. inner-approximation) of the set of values for universally quantified variables $x_{2i-1}$ and inner-approxima\-tion of the set for existentially quantified variables $x_{2i}$, by boxes, provides with our method an inner-approximation (resp. outer-approximation) of $R_{\bfm p}(f)$. 
\end{remark}

\begin{remark}
In control applications, control $u$ and disturbance $w$ are generally functions of time. We are not quantifying over functions here, which would be a much more intricate problem to solve, but, as in e.g. \cite{lcss20}, we are considering that control and disturbances are discretized, hence constant, over small time intervals: they thus are identified with a finite set of parameters, over a bounded time horizon. 
\end{remark} 

Computing reachable sets $R_{\bfm p}(f)$ being intractable in general, as it includes in particular the computation of the range of a function, we focus on computing tight inner- and outer-approximations. 

\paragraph{Running example}

In the sequel, we will illustrate our approach on a simple Dubbins vehicle model described below, where function $f$ in (\ref{eq:robtime}) is the flow function $\varphi$ of the system. In general, the flow function does not admit closed forms, but our method will still be applicable in that context as it will only require outer-approximations of its values and of its Jacobian. Still, when comparing with quantifier elimination methods, we will need to give polynomial approximations for $\varphi$, which will be developed in Example \ref{ex:Dubbins3}. 

\begin{example}[Dubbins vehicle \cite{lcss20}]
\label{ex:Dubbins1}
We simplify the model from \cite{lcss20} to consider only uncertainties on the $x$ axis: $\dot{x}=v cos(\theta)+b_1$, $\dot{y}=v sin(\theta)$, $\dot{\theta}=a$. 
We suppose that the speed $v$ is equal to 1 and we have a control period of $t=0.5$. 
The  initial conditions are uncertain given in $\mathbb{X}_0=\{(x,y,\theta) \ \mid \ x \in [-0.1,0.1], \ y \in [-0.1,0.1], \ \theta \in [-0.01,0.01]\}$, the control $a$ can take values in $\mathbb{U}=[-0.01,0.01]$ and disturbance $b_1$ can take values in $\mathbb{W}=[-0.01,0.01]$. Both control and disturbance are supposed to be constant over the control period $[0,0.5]$. This could naturally be extended for any number of control periods, with piecewise constant control and disturbances.  
We are interested in computing approximations of reachable sets  of the form (\ref{eq:robtime}) where $\varphi$ is the solution flow of the system. 

\end{example}




\subsection{Quantified reachability problems}
\label{sec:qrp}

Quantified reachability problems are central in control and hybrid systems, we detail below a few examples. 

\paragraph{General robust reachability} 

 A classical robust reachability problem consists in  computing the  states  reachable at some time $T\geq 0$ for some control, independently of disturbances  which can even be adversarial with respect to the control and initial state: 
\begin{equation}
R_{\exists \forall}(\varphi)=\left\{ z \in \R^m \mid \exists u \in \mathbb{U}, \exists x_0 \in \mathbb{X}_0, \ \forall w \in {\mathbb W}, \right.
\left. z=\varphi(T;x_0,u,w)\right\}
\label{eq:robustreachability}
\end{equation}

However, requiring to reach a given target point $z \in R_{\exists \forall}(\varphi)$ at time $T$ independently of the disturbance is most often a too constrained problem.
A better quantified problem is the relaxation to whether we can reach this point within time $[0,T]$ instead of at fixed time:  
\begin{multline}
\label{eq:robtime}
   R_{\exists \forall \exists}(\varphi)= \{ z \in \R^m \mid \exists u \in \mathbb{U}, \ \exists x_0 \in \mathbb{X}_0, \ \forall w \in {\mathbb W}, \ \exists s \in [0,T] \\ z=\varphi(s;x_0,u,w)\}
    \end{multline}


\begin{example}[Dubbins vehicle (continued)]
\label{ex:Dubbins2}
We want the robust reachable set  within one time period, i.e. until time $t=0.5$, with one control value $a_0$ applied between times $0$ and $0.5$. 
This corresponds to $R_{\exists \forall \exists}(\varphi)$ of Equation (\ref{eq:robtime}) with sets $\mathbb{U}$, $\mathbb{X}_0$, $\mathbb{W}$ defined in Example \ref{ex:Dubbins1} and $T=0.5$.
\end{example}

Another possible relaxation of Equation (\ref{eq:robustreachability}) is to consider the set of states that can be reached up to $\delta$. This corresponds to the  quantified problems: 
\begin{multline}
\label{Rf1}
R_{\exists \forall \exists}(\varphi)=\{ z \in \R^m \mid \exists u \in \mathbb{U}, \ \exists x_0 \in \mathbb{X}_0, \ \forall w \in {\mathbb W}, \\ \exists \delta \in [-\delta, \delta]^m, \ z=\varphi(T;x_0,u,w)+\delta\}
\end{multline}
\noindent Time  and space tolerances can also be combined. We will for instance do so on the running example, by considering the quantified problem of Equation (\ref{eq:original}).  



Finally, even more complicated quantified problems are of interest in robust control, such as: 
\begin{multline}
    \label{eq:robustreachability2}
    R_{\forall \exists \forall \exists}(\varphi)=\{ z \in \R^m \mid \forall x_0 \in \mathbb{X}_0, \ \exists u \in \mathbb{U}, \ \forall w \in {\mathbb W}, \exists s \in [0,T], \\ z=\varphi(s;x_0,u,w)
\end{multline}
\noindent where the control $u$ can observe and react to the initial conditions $x_0$, but  not to the disturbance $w$. 







\paragraph{Motion planning}

Motion planning problems are typically described by
 quantified formulas, for instance when prescribing  waypoints or regions along with specific time intervals through which a controller should steer a dynamical system. 

As an example, suppose we want to go through regions $S_j$ between times $T_{j-1}$ and $T_j$, for $j=1,\ldots,k$, and characterize the set of final states or locations $z_k$, this implies finding the following set $R_{\exists \forall \ldots \forall \exists}(\varphi)$: 
\begin{multline}
\{ z_k \in \R^m \mid \exists u_1 \in \mathbb{U}, \ \forall x_0 \in \mathbb{X}_0, \ \forall w_1 \in \mathbb{W}, \ \exists t_1 \in [0,T_1], \ \exists z_1\in S_1, \ \exists u_2 \in \mathbb{U}, \\ \forall w_2 \in \mathbb{W}, \ \exists t_2 \in [T_1,T_2], \ \exists z_2 \in S_2, \  \ldots, \
\exists u_k \in \mathbb{U}, \ \forall w_k \in \mathbb{W}, \ \exists t_k \in [T_{k-1}, T], \\
\left(\begin{array}{c}
z_1 \\
\ldots \\
z_k
\end{array}\right)
=\left(\begin{array}{c}
\varphi(t_1; u_1, x_0, w_1) \\ 
\ldots \\
\varphi(t_k-t_{k_1}; u_k, z_{k-1},w_k)
\end{array}\right)
\label{eq:motionplanning}
\end{multline}

\paragraph{Temporal logic properties}
Temporal logics such as Metric Interval Temporal Logic (MITL) and Signal Temporal Logic (STL) have been successful in specifying numerous properties of interest for control systems, see e.g. \cite{2013donzestl}. Such formulas naturally produce complex quantified formulas since the semantics of "always $\Phi$ between times $a$ and $b$" (resp. "eventually $\Phi$ between times $a$ and $b$") in terms or ordinary first-order propositional formulas is $\forall t\in [a,b], \ \Phi$ (resp. $\exists t \in [a,b], \ \Phi$). 

It is not the subject here to discuss the class of temporal logic formulas that we can interpret through Equation (\ref{eq:Rf}), but rather to exemplify the potential for our approach. 
It is important though to note that not only we can interpret the standard boolean semantics of a class of such temporal formulas, but also their robust semantics~\cite{2009fainekosrobustnessMITL}. 
Moreover, formulas such as Equation (\ref{eq:Rf}) allow for quantifying over any parameters of the dynamics of a control system, hence to express quantifications over trajectories, making it possible to compare trajectories such as  in hyperproperties, see e.g. \cite{hyper_real17}.
For instance, if we consider the behavioural robustness of a system, which specifies that small differences in system inputs result in small differences in system outputs, this can be measured by different quantified expressions such as: 
%
\begin{multline*}
R_{\exists \forall \exists \forall \exists}(\varphi)=\{z \ \mid \ \exists x_0 \in \mathbb{X}_0, \ \exists \delta \in [-\epsilon,\epsilon]^i, \\ 
\forall u \in \mathbb{U}, \ \exists u' \in \mathbb{U}, \ 
\forall w \in \mathbb{W}, \ \exists t \in [T_1,T_2], \
z = \lVert \varphi(t; x_0, u, w)-\varphi(t;x_0+\delta,u',w) \rVert\}
\end{multline*}
\noindent which measures the distance between two trajectories of the same system when starting with close enough initial conditions, under any disturbance but taken equal for the two trajectories.

\subsection{Related work}

\label{sec:relatedwork}


\paragraph{Set-based methods for reachability analysis}


Our approach is related to outer-approximations of non-linear continuous and controlled systems: outer-approxi\-mations of the reachable set of such systems is a particular case of our approach and we rely on such outer-approximations to compute outer- and inner-approximations of quantified problems. Many methods for outer-approxi\-mating reachable sets for continuous systems have been developed. For linear systems, 
direct set-based methods have been designed for estimating the exponential of a matrix, or of Peano-Baker series for uncertain systems 
\cite{Alt4}, 
using support functions, \cite{Support}, 
zonotopes \cite{GirardLinear06},  
ellipsoids 
\cite{kurzhanskiHSCC2000}, for efficient representations of sets of states. 
For non-linear continuous systems, similar set-based techniques have been applied using polytopes \cite{Girard} or generalized polytopes such as polynomial zonotopes \cite{Althoff}. Authors have also been considering a variety of linearization, hybridization or polynomialization techniques such as in e.g. 
\cite{Althoff,AlthoffS08}. 
Instead of directly propagating tractable sets through the dynamics, Taylor methods~\cite{Berz} have been applied extensively by a number of authors, e.g. \cite{Sriram1}, for computing polynomial approximations of solutions of ODEs (flowpipes), whose image can then be approximated using any of the tractable set representation we mentionned above. 
Another approach for reachability is through Hamilton-Jacobi techniques, see e.g. 
\cite{HJBsurvey}, that express functions whose zero sub-level set give the reachable sets as solutions to a Hamilton-Jacobi PDE. 






There are far less methods for inner-approximating images of functions or sets of reachable states. 
Interval-based methods, relying on space discretization, have been used for inner-approximating the image of functions~\cite{Jaulin}. 
They were also used to outer and inner approximate solutions of differential systems with uncertain initial conditions~\cite{LEMEZO201870}.
An interesting recent work \cite{Althoff1} calculates the inner-approximation by scaling down an outer-approximation, until a suitable criterion (involving the boundary of the reachable set of states) is met. A similar criterion is used in 
\cite{Xue}, with polytopic approximations. 
An important body of the inner-approximation literature uses either Hamil\-ton-Jacobi methods methods see \cite{Mitchell2} and \cite{innerXue2} or set-based approximate backwards reachability, i.e. through the inverted dynamics 
see e.g. \cite{Underapproxflowpipes} and 
\cite{Xue}. 

Our approach is directly linked to previous work on modal intervals and mean-value theorems~\cite{gold1,gold3} but extends it considerably as we are not bound to consider only $\forall \exists$ statements. 
It also  includes the approximations of robust reachable sets with time-varying inputs and disturbances as defined in \cite{hscc2019,lcss20,ADHS}.









\paragraph{Quantifier elimination}

Many verification and synthesis problems in computer science and control theory can be represented by the first order formula
\begin{equation}
    \Phi(p_1,\ldots,p_m) \equiv Q_1 x_1, \ \ldots \ Q_n x_n, \ P(p_1,\ldots,p_m,x_1,\ldots,x_n)
    \label{eq:quantelim}
    \end{equation}
where $Q_i \in \{\forall, \exists\}$ are either universal or existential quantifiers, $p_1,\ldots,p_m$ are free variables and $P$ is a quantifier-free formula constructed by conjunction, disjunction and negation of atomic formulas of the form $f \op 0$ where $\mbox{op} \in \{=,\neq,< , \leq\}$ is a relational operator and $f$ is a polynomial.


The first quantifier elimination algorithm is due to Tarski 
\cite{Tarski} for the first order theory of real numbers. 
Because of its high computational
complexity, this algorithm is not used in practice. 
The first practical algorithm is due to Collins \cite{Collins}, and is based on cylindrical algebraic decomposition. Still, applications of this algorithm are limited because its complexity is doubly exponential in $n+m$. 


The applications of quantifier elimination to control design~\cite{QEcontrol} are numerous: output feedback stabilization, simultaneous stabilization, robust stabilization, frequency domain multiobjective design. In \cite{QEcontrol}, they are mostly exemplified on linear systems. 
The work of \cite{QEcontrol} has been extended to non-linear systems in e.g. \cite{Jirstrand1997NonlinearCS}, including also some trajectory tracking properties. 
Reachability is not in general solvable by algebraic methods. The reason is that the solution set of a system of differential equations is not algebraic in general. However, \cite{Jirstrand1997NonlinearCS} considers a more restricted form of reachability, along prescribed types of trajectories, that can be investigated using semi-algebraic tools. Further generalizations are exemplified in \cite{sturm:hal-03142063}, with controller synthesis, stability, and collision avoidance problems. Quantifier elimination techniques have also been applied to model predictive control, see e.g. \cite{MPC-QE}.
%
%
%
Finally, application of quantifier elimination to robot motion planning, similar to the one considered in Section \ref{sec:qrp}, has been considered in e.g. \cite{lavalle}, for instance for the classical piano mover's problem \cite{pianomovers}.

Our quantified problem of Equation (\ref{eq:Rf}) is an instance of general quantifier elimination, although we do not impose that functions $f$ we consider are polynomial. 
We compare our method with quantifier elimination techniques in the sequel, although our method is specifically designed to give tight inner and outer approximations in a fast manner whereas quantifier elimination aims at finding exact solution sets along with algebraic conditions under which they exist, at the expense of time complexity. 
As quantifier elimination needs to consider polynomials, we compare our method with quantifier elimination on approximations of the flow function $\varphi$ given by e.g. Taylor expansions \cite{Taylor07}, see Example \ref{ex:Dubbins3}. 

\paragraph{Satisfiability modulo theory (SMT)}
Some SMT solvers interpret quantified formulas over theories. Still, it has long been known that there is no sound and complete procedure already for first-order logic formulas of linear arithmetic with uninterpreted function symbols \cite{deMoura}, meaning that the corresponding SMT solvers generally rely on heuristics to deal with quantifiers. The closer SMT solver to our approach is dReal/dReach \cite{dReal} which has support for some quantified SMT modulo the theory of real numbers and modulo ODEs. Such SMT solvers do not synthetize the set of states that verifies some quantified formula as we do, but can be used for checking this set is correct, up to some "resolution". The time complexity of such methods is also much higher that what we are proposing, and dReal/dReach is limited to the exists-forall fragment. 

\begin{example}[Dubbins vehicle (continued)]
\label{ex:Dubbins3}
We want to compute the robust reachable set of Example \ref{ex:Dubbins2} using quantifier elimination.  
As we do not have the exact flow  $\varphi(t;x_0,u,w)$, we use approximations by Taylor expansions, see e.g. \cite{lcss20} where a similar example was discussed. 
With the initial conditions and parameters values of Example \ref{ex:Dubbins1}, we get the following Taylor expansion in time with zonotopic coefficients (which gives some Taylor model of the solution flow): 
\begin{multline}
\label{eq:tm_dubbins}
P(t):
x=0.1\epsilon_1+(1+0.01\epsilon_2)t+1.31 \ 10^{-7} \epsilon_3 t^2
\ \wedge 
y=0.1\epsilon_4+(0.01\epsilon_6+0.01\epsilon_7 t) t\\
+(0.005 \epsilon_5) t^2 \ \wedge 
\theta(t)=0.01\epsilon_6+0.01\epsilon_7 t
\end{multline}
with $\epsilon_i \in [-1,1]$ for $i=1,\ldots,7$. and $x_0=0.1\epsilon_1$, $b_1=0.01\epsilon_2$, $y_0=0.1\epsilon_4$, $a=0.01 \epsilon_7$ and $\theta_0=0.01 \epsilon_6$.
These were obtained by a linearization of the cosinus and sinus and simple estimates of remainders, which could be improved but were kept simple for the sake of readability. 

We interpret the $R_{\exists \forall \exists}(\varphi)$ formula of Equation (\ref{eq:robtime}) by quantifying over the symbolic variables $\epsilon_1$ to $\epsilon_7$. We have  a correspondence between initial states and inputs of the problem and the $\epsilon_i$, except for $\epsilon_3$ and $\epsilon_5$ that abstract the remainder term of the Taylor approximation of the solution. Hence an over-approximation of $R_{\exists \forall \exists}(\varphi)$ can be obtained by quantifier elimination on the formula: 
\begin{multline*}
\exists \epsilon_7 \in [-1,1], \ \exists \epsilon_1 \in [-1,1], \ \exists \epsilon_4 \in [-1,1], \ \exists \epsilon_6 \in [-1,1], \\ \forall \epsilon_2 \in [-1,1],  \ \exists \epsilon_3 \in [-1,1], \ \exists \epsilon_5 \in [-1,1], \ \exists t \in [0,0.5], \; P(t)
\end{multline*}
where $P(t)$ is defined by Equation (\ref{eq:tm_dubbins}) and all symbols are existentially quantified except $\epsilon_2$ which corresponds to the disturbance $b_1$.
There are numerous software implementing some form or another of quantifier elimination, e.g. QEPCAD \cite{qepcad},  
REDUCE RedLog package \cite{redlog},  
and 
Mathematica \cite{mathematica}. We use in the sequel Mathematica and its operation \texttt{Reduce}. We refer the reader to the appendix, Section \ref{sec:mathematica}, where all queries in Mathematica are provided. Using Mathematica for the problem above times out, but  when we make independent queries on $x$, $y$ and $\theta$, we get $x \in [-0.1,0.595]$, $y \in [-0.10875, 0.10875]$ and $\theta \in [-0.015, 0.015]$, with a warning about potential inexact coefficients, respectively in about 25, 12 and 0.05 seconds on a MacBook Pro 2.3GHz Intel Core i9 8 cores with 16GB of memory. This gives a correct outer approximation of $R_{\exists \forall \exists}(\varphi)$. 


Similarly, for inner-approximation we eliminate the quantifiers in:
\begin{multline*}
\exists \epsilon_7 \in [-1,1], \ \exists \epsilon_1 \in [-1,1], \ \exists \epsilon_4 \in [-1,1], \ \exists \epsilon_6 \in [-1,1], \\ \forall \epsilon_2 \in [-1,1], \ \forall \epsilon_3 \in [-1,1], \forall \epsilon_5 \in [-1,1], \ \exists t \in [0,0.5], \; P(t)
\end{multline*}
\noindent where the uncertainties $\epsilon_3$ and $\epsilon_5$ are now quantified universally, reflecting the fact that inner-approximation corresponds to making no hypothesis on the values of these variables corresponding to approximation errors, apart from knowing bounds. The elimination times out for the full problem and returns the same bounds as before up to $10^{-5}$, when solving the problem separately on each variable $x$, $y$ and $\theta$, in respectively 2.2, 17.1 and 0.06 seconds. However, contrarily to the over-approximation, these independent queries do not allow us to conclude about an actual inner-approximation for $R_{\exists \forall \exists}(\varphi)$, as the existentially quantified variables may be assigned different values in the 3 independent queries. 



\end{example}

\subsection{Contributions}


We extend the approach of \cite{hscc2019}, which is restricted to solving problems of the form $R_{\exists}(f)$ or $R_{\forall \exists}(f)$, to deal with arbitrary quantified formulas of the form of $R_{\bfm p}(f)$ of Equation (\ref{eq:Rf}). 
These  include the generalized robust reachability problems discussed in Section~\ref{sec:qrp}. 
The problem of finding the exact set $R_{\bfm p}(f)$ admits closed formulas for a scalar-valued affine function $f$, as described in Section \ref{affine}, culminating in Proposition \ref{prop:affine}. By  local linearization techniques, akin to the ones used in \cite{hscc2019}, we get explicit formulas for inner and outer-approximations of general non-linear scalar-valued functions in Section \ref{gen1D}, Theorem~\ref{thm:approx1D}. 

We consider the general vector-valued case in Section \ref{nD} and Theorem \ref{thm:approxnD}. The difficulty lies, as for the $\forall \exists$ case of \cite{hscc2019}, in the computation of inner-approximations. The solution proposed is to interpret slightly relaxed quantified problems, one dimension at a time, that, altogether, give guaranteed inner-approximations of $R_{\bfm p}(f)$, extending the method of \cite{hscc2019}. The combinatorics of variables, quantifiers and components of $f$ make the intuition of the indices used in Theorem \ref{thm:approxnD} difficult to fully apprehend: we thus begin Section \ref{nD} by an example. 

The general form of quantified problems we are considering here makes solutions that we propose difficult to assess and compare: we are not aware of any existing tool solving similar problems, at the exception of quantifier elimination algorithms,  discussed in Section \ref{sec:relatedwork}. We also develop a sampling method, see Remark \ref{remark:sampling}, for checking the tightness of our results. 

Finally, we report on our implementation of this method in \texttt{Julia} in Section \ref{sec:bench}. Benchmarks show that this method is tractable, with experiments up to thousands of variables solved in a matter of tens of seconds. 

\section{Approximations of arbitrary quantified formulas in the case of scalar-valued functions}

We first focus in Section \ref{affine} on the computation of $R_{\bfm p}(f)$ where $f$ is an affine function from $\R^p$ to $\R$. In this case, we derive exact bounds. We then rely on this result to carry on with the general case in Section \ref{gen1D},  using a mean-value theorem.




\subsection{Exact bounds for scalar affine functions}
\label{affine}

We consider affine functions, i.e. functions of the form $f(x_1,\ldots,x_q)=\delta_0+\sum_{i=1}^q \delta_i x_i$. 
For these functions, we consider the general quantified problem
defined, for $Q_j=\forall$ or $\exists$, as:
\begin{multline*}
S_{q}(\delta_0; Q_1,\delta_1; \ldots; Q_q, \delta_q)= 
\left\{ z \in \R \mid Q_1 x_1 \in [-1,1], \right.\\ Q_2 x_{2} \in [-1,1],  \ldots, 
\left. Q_q x_{q}  
\in [-1,1], \ z=f(x_1,x_2,\ldots,x_q)\right\}
\end{multline*}
\noindent We first see that we have: 
\begin{lemma}
\label{lemma:lem1}
\begin{multline}
S_q(\delta_0; Q_1,\delta_1; \ldots; Q_q, \delta_q)= \\ \left\{\begin{array}{l}
\bigcap\limits_{x_1\in [-1,1]} S_{q-1}(\delta_0+\delta_1 x_1; 
Q_2,\delta_2; \ldots; Q_q, \delta_q)  \mbox{ if $Q_1=\forall$} \\
\bigcup\limits_{x_1\in [-1,1]} S_{q-1}(\delta_0+\delta_1 x_1; 
Q_2,\delta_2; \ldots; Q_q, \delta_q)  \mbox{ if $Q_1=\exists$} 
\end{array}\right.
\label{induction}
\end{multline}
\end{lemma}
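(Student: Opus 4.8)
The plan is to prove the identity directly from the definition of $S_q$, simply by peeling off the outermost quantifier $Q_1$ and re-reading the remaining prefix as an instance of $S_{q-1}$. The key elementary observation is that an affine function can have its first variable absorbed into the constant term: since
$f(x_1,\dots,x_q) = (\delta_0+\delta_1 x_1) + \sum_{i=2}^q \delta_i x_i$,
for each \emph{fixed} value $x_1 \in [-1,1]$ the truncated formula ``$Q_2 x_2 \in [-1,1], \dots, Q_q x_q \in [-1,1],\ z = f(x_1,\dots,x_q)$'' is, by definition, precisely the assertion that $z \in S_{q-1}(\delta_0+\delta_1 x_1;\ Q_2,\delta_2;\ \dots;\ Q_q,\delta_q)$. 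So I would first record this reparametrization, checking that only the constant coefficient changes, $\delta_0 \mapsto \delta_0+\delta_1 x_1$, while the coefficients $\delta_2,\dots,\delta_q$ and the quantifiers $Q_2,\dots,Q_q$ are carried over verbatim.

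Given that, the two cases are just the usual set-theoretic reading of quantifier prefixes. If $Q_1 = \forall$, then $z \in S_q(\delta_0; Q_1,\delta_1;\dots;Q_q,\delta_q)$ holds iff for every $x_1 \in [-1,1]$ we have $z \in S_{q-1}(\delta_0+\delta_1 x_1; Q_2,\delta_2;\dots;Q_q,\delta_q)$, which is exactly membership in $\bigcap_{x_1\in[-1,1]} S_{q-1}(\delta_0+\delta_1 x_1; Q_2,\delta_2;\dots;Q_q,\delta_q)$. Symmetrically, if $Q_1 = \exists$, then $z \in S_q(\dots)$ iff there exists $x_1 \in [-1,1]$ with $z \in S_{q-1}(\delta_0+\delta_1 x_1;\dots)$, i.e.\ $z \in \bigcup_{x_1\in[-1,1]} S_{q-1}(\delta_0+\delta_1 x_1;\dots)$. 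Writing out the chain of logical equivalences ``$z \in S_q$ $\iff$ $Q_1 x_1\in[-1,1]\,(\dots)$ $\iff$ $Q_1 x_1\in[-1,1]\ (z\in S_{q-1}(\dots))$'' and then interpreting the last $Q_1$ as an intersection or a union completes the argument.

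There is no genuine obstacle here; this is a definitional unfolding. The only points that require a bit of care are purely bookkeeping: being explicit that the first variable is absorbed into the constant term of the affine function (and nothing else moves), and making sure the passage from the $q$-variable problem to the $(q-1)$-variable problem respects the indexing convention for the pairs $(Q_i,\delta_i)$. I would also note explicitly that both equalities are equalities of subsets of $\R$ (not just inclusions), since each step is an ``if and only if'', which is what licenses treating $\forall$ as $\bigcap$ and $\exists$ as $\bigcup$.
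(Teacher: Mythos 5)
Your proof is correct and follows essentially the same route as the paper's: unfold the definition of $S_q$, absorb $\delta_1 x_1$ into the constant term so that the remaining prefix is exactly membership in $S_{q-1}(\delta_0+\delta_1 x_1;\dots)$, and then read the outermost $\forall$ as an intersection and $\exists$ as a union over $x_1\in[-1,1]$. The paper's proof in the appendix does precisely this case split with the same chain of equivalences.
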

The proof is given in Section \ref{proof:lem1}. 


\begin{remark}
\label{remark:sampling}
The first consequence of Lemma \ref{lemma:lem1} is that it gives a way to probe quantified formulas of the form of Equation (\ref{eq:Rf}), by sampling, as follows. 
For each quantifier $i$, in the order of encounter, join samples or ranges for $\exists x_i$, intersect them for $\forall x_i$.
For example, for the alternation $\exists x_1, \forall x_2, \exists x_3$, we compute
\[  [\min_{x_i} \max_{x_2} \min_{x_3} f(x_1,x_2,x_3) ,  \max_{x_i} \min_{x_2} \max_{x_3} f(x_1,x_2,x_3)]   \]
This gives an estimation of the  robust range of the function, which is  in the general case neither an inner-approximation nor an outer-approxi\-mation, as this sampling approach performs inner-approximation with respect to existential quantification and outer-approximation with respect to universal quantification. Note that in some particular cases (e.g. affine) we can design an exact method based on these and on unions and intersections of particular polyhedra. 
\end{remark}

Using Lemma \ref{lemma:lem1}, we see that $S_2(\delta_0; \forall, \delta_1; \exists,\delta_2)$ is empty when the impact on $f$ of the existentially quantified variable is not of large enough magnitude to counteract the effect of the universally quantified variable. The formula given below expresses such conditions for $R_{\bfm p}(f)$ to be non-empty, by imposing a bound on the ${\ell}_1$ norm of the universally quantified variables, by a suitable combination of the $\ell_1$ norms, noted $|| {\bfm x} ||$, of other variables, in particular the existentially quantified ones. 

\begin{proposition}
\label{prop:affine}
Let $f$ be an affine function defined by: 
$$f({\bfm x}_1,{\bfm x}_2,\ldots,{\bfm x}_{2n})=\delta_0+\langle \Delta_1, {\bfm x}_1\rangle+\langle \Delta_2, {\bfm x}_2\rangle+\ldots+\langle \Delta_{2n}, {\bfm x}_{2n}\rangle$$
\noindent with ${\Delta}_i=(\delta_{k_{i}+1},\ldots,\delta_{k_{i+1}})\in \R^{j_i}$, $i=1,\ldots,2n$, where $k_i=\sum\limits_{l=1}^{i-1} j_l$, and $\langle .,. \rangle$ denotes the scalar product. Consider $R_{\bfm p}(f)$ for the partition ${\bfm p}=(j_1,\ldots,j_{2n})$ of $p$, as in Equation (\ref{eq:Rf}), then we have:
\begin{multline*}
R_{\bfm p}(f)=\delta_0+
\left[\sum\limits_{k=1}^n \left(||{\Delta}_{2k-1}||-||{\Delta}_{2k}||\right)\right.,
\left.\sum\limits_{k=1}^n \left(||{\Delta}_{2k}||-||{\Delta}_{2k-1}||\right)\right] 
\end{multline*}
\noindent if $||{\Delta}_{2l-1}|| \leq  ||{\Delta}_{2l}|| + \sum\limits_{k=l+1}^n \left(||{\Delta}_{2k}||-||{\Delta}_{2k-1}||\right)$ for $l=1,\ldots,n$, 
otherwise $R_{\bfm p}(f)=\emptyset$
\end{proposition}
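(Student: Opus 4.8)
I would prove the statement in two stages: a reduction from vector‑valued blocks to scalar coefficients, followed by an induction on the number $n$ of quantifier alternations built on Lemma~\ref{lemma:lem1}. Throughout, write $d_i := \|\Delta_i\| \ge 0$ and introduce the \emph{partial half-widths} $B_l := \sum_{k=l+1}^{n}(d_{2k}-d_{2k-1})$ for $l=0,\ldots,n$, so that $B_n=0$ and $B_{l-1}=(d_{2l}-d_{2l-1})+B_l$. A first observation is that the hypothesis of the statement, $d_{2l-1}\le d_{2l}+B_l$ for $l=1,\ldots,n$, is term‑by‑term equivalent to $B_{l-1}\ge 0$ for $l=1,\ldots,n$; so the claim to prove becomes: if $B_0,\ldots,B_{n-1}\ge 0$ then $R_{\bfm p}(f)=\delta_0+[-B_0,B_0]$ (which is the displayed interval, since $B_0=\sum_k(\|\Delta_{2k}\|-\|\Delta_{2k-1}\|)$), and $R_{\bfm p}(f)=\emptyset$ otherwise.

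\emph{Stage 1 (reduction to scalars).} Since $f$ depends on the block $\bfm{x}_i$ only through $\langle \Delta_i,\bfm{x}_i\rangle$, and since this scalar ranges over exactly $[-d_i,d_i]$ as $\bfm{x}_i$ ranges over $[-1,1]^{j_i}$ (the extreme values being attained at the vertices whose entries match the signs of $\Delta_i$, the intermediate values by continuity of a linear map on a connected domain), a routine change of variable inside each quantifier, carried out from the innermost outward, turns $R_{\bfm p}(f)$ into the scalar quantity $S_{2n}(\delta_0;\forall,d_1;\exists,d_2;\ldots;\forall,d_{2n-1};\exists,d_{2n})$ of Section~\ref{affine}.

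\emph{Stage 2 (induction on $n$).} For $n=1$, Lemma~\ref{lemma:lem1} gives $S_2(\delta_0;\forall,d_1;\exists,d_2)=\bigcap_{t_1\in[-1,1]}[\delta_0+d_1t_1-d_2,\ \delta_0+d_1t_1+d_2]$, which equals $\delta_0+[d_1-d_2,\ d_2-d_1]$ and is non‑empty exactly when $d_1\le d_2$, i.e. $B_0\ge 0$ — as claimed. For $n>1$, apply Lemma~\ref{lemma:lem1} twice, peeling off $Q_1=\forall$ then $Q_2=\exists$:
\[
S_{2n}(\delta_0;\forall,d_1;\exists,d_2;\ldots)=\bigcap_{t_1\in[-1,1]}\ \bigcup_{t_2\in[-1,1]}\ S_{2n-2}\bigl(\delta_0+d_1t_1+d_2t_2;\ \forall,d_3;\ldots;\exists,d_{2n}\bigr).
\]
The inner expression is an $(n-1)$‑alternation problem whose partial half-widths are precisely $B_1,\ldots,B_{n-1}$. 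If some $B_l$ with $l\ge 1$ is negative, the induction hypothesis makes it empty for every value of its constant, hence the union and then the intersection are empty, matching the ``otherwise'' case. If $B_1,\ldots,B_{n-1}\ge 0$, the induction hypothesis replaces it by $\bigl(\delta_0+d_1t_1+d_2t_2\bigr)+[-B_1,B_1]$; the union over $t_2\in[-1,1]$ of this continuously translated family of equal‑width closed intervals is again a closed interval, equal to $[\delta_0+d_1t_1-(d_2+B_1),\ \delta_0+d_1t_1+(d_2+B_1)]$; and the intersection over $t_1\in[-1,1]$ of those is $\delta_0+[d_1-(d_2+B_1),\ (d_2+B_1)-d_1]$, non‑empty iff $d_1\le d_2+B_1$, i.e. $B_0\ge 0$. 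Since $(d_2+B_1)-d_1=B_0$, this is exactly the claim at level $n$, closing the induction.

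\emph{Main obstacle.} The substantive points are the change‑of‑variable/rescaling argument of Stage~1 (justifying that a whole block under one quantifier behaves like a single scalar with coefficient $\|\Delta_i\|$), and, in Stage~2, keeping track of which of the $n$ scalar inequalities governs non‑emptiness at which peeling step — the $B_l$ reformulation is designed precisely to make this transparent — together with the two elementary facts that a union, resp. an intersection, of a continuously translated family of equal‑width closed intervals is itself a closed interval, resp. is non‑empty exactly under the stated inequality. These are what make the alternating $\bigcap\bigcup\bigcap\cdots$ collapse to a single symmetric interval.
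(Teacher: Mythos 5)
Your proof is correct and follows essentially the same route as the paper's: an induction on the number of quantifier alternations driven by Lemma~\ref{lemma:lem1}, in which the union over an existentially quantified variable widens a translated interval by its coefficient and the intersection over a universally quantified one narrows it by the same amount. The only differences are organizational — the paper runs a mutual induction between a $\forall$-leading property $P_{2n}$ and an $\exists$-leading property $Q_{2n-1}$ directly on the vector blocks (applying Lemma~\ref{lemma:lem1} $j_1$ times per block and bounding $\delta_0+\langle\Delta_1,\bfm{x}_1\rangle$ by $\delta_0\pm\|\Delta_1\|$), whereas you scalarize each block up front and track non-emptiness with the partial half-widths $B_l$, which is a tidier bookkeeping of the same argument.
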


The proof is given in Appendix \ref{proof:prop1}. 

\begin{remark}
In the sequel, when applying Proposition \ref{prop:affine}, we will use notations $\Delta_x$ with $x$ such as control $a$, disturbance $b_1$, angle $\theta$ instead of using for indices a potentially less understandable numbering. 
\end{remark}


\subsection{Inner and outer-approximations for non-linear scalar-valued functions}

\label{gen1D}

We are now in a position to give inner and outer approximations of $R_{\bfm p}(f)$ for general scalar-valued $f({\bfm x}_1,{\bfm x}_2,\ldots,{\bfm x}_{2n})$ from $\R^p$ to $\R$. The principle is to carefully linearize $f$, so that inner and outer-approxima\-tions of $R_{\bfm p}(f)$ are given by inner and outer-approxima\-tions of a similar quantified problem on its linearization, this is Proposition \ref{prop:nonlin}. Combining this with e.g. simple mean-value approximations mentioned in Remark \ref{rem:rk2}, we obtain Theorem \ref{thm:approx1D}. After exemplifying these formulas on toy examples, we apply it, twice, to the Dubbins vehicle model of Example \ref{ex:Dubbins1}. We first use the Taylor approximation of its dynamics, Example \ref{ex:dubtaylor}. We then show that we do not need to compute such approximations and that our approach can also compute direct inner and outer-approximations of $R_{\bfm p}(f)$ where $f$ is the solution of a differential equation, Example \ref{ex:Dubbins1D}. 


As before, for a given function $f: \ \R^p \rightarrow \R$, we denote by ${\bfm p}=(j_1,\ldots,j_{2n})$ a partition of the $p$ arguments of $f$ and $k_l=\sum\limits_{i=1}^{l-1} j_i$, for $l=1,\ldots,2n+1$. 
We suppose we have $p$ intervals $A_1,\ldots,A_p$ and we write 
${\bfm A}_i=(A_{k_i+1},\ldots,A_{k_{i+1}}), \; i=1,\ldots,2n$ the corresponding boxes in $\R^{j_i}$. We will use the notation: 
\begin{multline*}
{\mathcal C}({\bfm A}_1,\ldots,{\bfm A}_{2n}) =\{
z \mid \forall {\alpha}_1 \in {\bfm A}_1, \ \exists {\alpha}_2 \in {\bfm A}_2,\ldots,\\ 
\forall {\alpha}_{2n-1} \in {\bfm A}_{2n-1}, \
\exists {\alpha}_{2n} \in {\bfm A}_{2n}, \ z=\sum\limits_{j=1}^{2n} \alpha_j\}.
\end{multline*}

\begin{proposition}
\label{prop:nonlin}
Given function $f: \ \R^p \rightarrow \R$ and partition ${\bfm p}$ as above, define the following families of functions 
$$
h^{x_1,\ldots,x_{j-1}}(x_{j})=f(x_1,\ldots,x_{j-1},x_{j},0,\ldots,0)-f(x_1,\ldots,x_{j-1},0,\ldots,0)$$
\noindent for $j=1,\ldots,p$,  
and suppose we have the following inner and outer-approxima\-tions  of their images, independently of $x_1,\ldots,x_{j-1}$, denoted by $range(.)$: 
${I}_j  \subseteq  range(h^{x_1,\ldots,x_{j-1}})  \subseteq  {O}_j$
for $j=1,\ldots,p$. 
Then, writing ${\bfm I}_i=\mathop{\Pi}\limits_{j=k_i+1}^{k_{i+1}} [\underline{I}_j,\overline{I}_j]$, ${\bfm O}_i=\mathop{\Pi}\limits_{j=k_i+1}^{k_{i+1}}[\underline{O}_j,\overline{O}_j]$, $i=1,\ldots,2n$, we have:
\begin{multline}
f(0,\ldots,0)+{\mathcal C}({\bfm O}_1,{\bfm I}_2,\ldots,{\bfm O}_{2n-1},{\bfm I}_{2n}) \subseteq R_{\bfm p}(f) \\
\subseteq f(0,\ldots,0)+C({\bfm I}_1,{\bfm O}_2,\ldots,{\bfm I}_{2n-1},{\bfm O}_{2n})
\label{app}
\end{multline}
\end{proposition}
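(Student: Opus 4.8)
The strategy is to reduce the non-linear statement to the affine case (Proposition~\ref{prop:affine}, or rather the set-valued description built from Lemma~\ref{lemma:lem1} and the $\mathcal{C}$ notation) by a suitable change of variables that ``absorbs'' the non-linearity into the coefficient intervals. The key observation is the telescoping identity
\begin{equation*}
f(x_1,\ldots,x_p) = f(0,\ldots,0) + \sum_{j=1}^{p} h^{x_1,\ldots,x_{j-1}}(x_j),
\end{equation*}
which holds for every point $(x_1,\ldots,x_p)$ in the domain $[-1,1]^p$ because the sum on the right telescopes exactly to $f(x_1,\ldots,x_p)-f(0,\ldots,0)$. So for each fixed tuple $(x_1,\ldots,x_p)$, $f$ is written as a sum of $p$ scalar terms $\alpha_j := h^{x_1,\ldots,x_{j-1}}(x_j)$, and by hypothesis each such $\alpha_j$ lies in the interval $[\underline{O}_j,\overline{O}_j]$, and moreover \emph{every} value of $[\underline{I}_j,\overline{I}_j]$ is attained by some choice of $x_j$ (uniformly in $x_1,\ldots,x_{j-1}$, which is the crucial point making the inner-approximation argument work).

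\textbf{Outer-approximation.} I would prove $R_{\bfm p}(f)\subseteq f(0,\ldots,0)+\mathcal{C}({\bfm I}_1,{\bfm O}_2,\ldots,{\bfm I}_{2n-1},{\bfm O}_{2n})$ by taking $z\in R_{\bfm p}(f)$ and unrolling the quantifier alternation. Fix the universal blocks ${\bfm x}_1,{\bfm x}_3,\ldots$ to witness the memberships defining the right-hand $\mathcal{C}$ set. Concretely: pick ${\bfm\alpha}_1\in{\bfm I}_1$ arbitrary; since ${\bfm I}_1\subseteq range$, there is ${\bfm x}_1\in[-1,1]^{j_1}$ realizing the corresponding values of the $h$-functions in block~1. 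By $z\in R_{\bfm p}(f)$ there is an existential block ${\bfm x}_2$; the induced values $\alpha_j$ for $j$ in block~2 lie in ${\bfm O}_2$. Continue alternately, choosing the universal witnesses inside ${\bfm I}_{2k-1}$ and reading off that the existential values land in ${\bfm O}_{2k}$, ending with $z = f(0,\ldots,0)+\sum_j\alpha_j$. This exhibits $z-f(0,\ldots,0)$ as an element of $\mathcal{C}({\bfm I}_1,{\bfm O}_2,\ldots,{\bfm O}_{2n})$. The induction is cleanest if phrased via Lemma~\ref{lemma:lem1}, peeling one block of quantifiers at a time and using monotonicity of the union/intersection operations in the coefficient intervals ($\mathcal{C}$ is monotone: enlarging an existential block's interval enlarges the set, enlarging a universal block's interval shrinks it).

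\textbf{Inner-approximation.} For $f(0,\ldots,0)+\mathcal{C}({\bfm O}_1,{\bfm I}_2,\ldots,{\bfm O}_{2n-1},{\bfm I}_{2n})\subseteq R_{\bfm p}(f)$, take $z$ in the left-hand set and produce a Skolem strategy for the $R_{\bfm p}(f)$ game. Given any universal block ${\bfm x}_1\in[-1,1]^{j_1}$ played by the adversary, it induces values $\alpha_j\in{\bfm O}_1$ for $j$ in block~1; since $z-f(0,\ldots,0)\in\mathcal{C}({\bfm O}_1,{\bfm I}_2,\ldots)$, for these $\alpha_1\in{\bfm O}_1$ there exist ${\bfm\alpha}_2\in{\bfm I}_2$ (and so on) with $z-f(0,\ldots,0)=\sum_j\alpha_j$; and because ${\bfm I}_2\subseteq range(h)$ \emph{uniformly in} $x_1,\ldots$, we can pick an actual ${\bfm x}_2\in[-1,1]^{j_2}$ realizing those $\alpha_2$-values regardless of the already-fixed ${\bfm x}_1$. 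Iterating gives a full witness tuple with $z=f({\bfm x}_1,\ldots,{\bfm x}_{2n})$, hence $z\in R_{\bfm p}(f)$. Again this is most rigorously organized by induction on $n$ using Lemma~\ref{lemma:lem1}: after fixing ${\bfm x}_1$, the residual problem is $R$ for the partition $(j_2,\ldots,j_{2n})$ with $f(0,\ldots,0)$ replaced by $f(0,\ldots,0)+\sum_{j\in\mathrm{block}\,1}\alpha_j$, and the $h$-functions for the remaining blocks are exactly those appearing with the prefix $x_1$ fixed, whose ranges are still sandwiched by the same ${\bfm I}_j,{\bfm O}_j$.

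\textbf{Main obstacle.} The delicate point is the \emph{uniformity} of the inner-approximation inclusion ${\bfm I}_j\subseteq range(h^{x_1,\ldots,x_{j-1}})$ in the prefix variables: it is what lets the Skolem function for block $2k$ pick a genuine $x$-value after the adversary has committed to the earlier universal blocks. One must check that this uniformity is preserved through the induction, i.e.\ that conditioning on fixed earlier variables does not shrink the relevant ranges below ${\bfm I}_j$ --- which holds because the hypothesis is stated as holding for \emph{all} $x_1,\ldots,x_{j-1}$ simultaneously. The bookkeeping of indices (matching $j$ ranging over $\{k_i+1,\ldots,k_{i+1}\}$ with the block $i$ and its quantifier parity) is the other source of friction, but it is purely notational once the telescoping identity and the uniformity are in hand.
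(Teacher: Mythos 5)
Your proof is correct and follows essentially the same route as the paper's: the telescoping identity $f(x_1,\ldots,x_p)=f(0,\ldots,0)+\sum_j h^{x_1,\ldots,x_{j-1}}(x_j)$, followed by a block-by-block game-semantic unrolling in which universal blocks are handled via the outer-approximations $\bfm{O}$ and existential blocks via the (prefix-uniform) inner-approximations $\bfm{I}$ for the inner inclusion, with the roles swapped for the outer inclusion. The paper likewise details the inner-approximation direction and dismisses the outer one as symmetric, and your emphasis on uniformity of $\bfm{I}_j\subseteq range(h^{x_1,\ldots,x_{j-1}})$ in the prefix variables is precisely the point the paper's induction relies on.
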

\begin{proof}
The proof is based on the fact that $f$ is the sum of all the $h_i$, $i=1,\ldots,2n$, and of $f(0,\ldots,0)$, and is proven by induction on the number of quantifier alternations, see Section \ref{proof:nonlin}. 
\end{proof}

\begin{remark}
\label{rem:rk2}
We do have such approximants of the range as necessary for Proposition \ref{prop:nonlin}, thanks to a generalized mean-value theorem~\cite{gold1,lcss20}. If we have, 
for all $i=1,\ldots,2n$ and all $j=k_{i}+1,\ldots,k_{i+1}$, 
${\nabla}_j=[\underline{\nabla}_j,\overline{\nabla}_j]$ such that: 
$$\left\{\left| \frac{\partial f}{\partial {x}_j}({\bfm x}_1,\ldots,{\bfm x}_{{i}},0,\ldots,0)\right| \mid {\bfm x}_l \in [-1,1]^{j_l}, \ l=1,\ldots,i\right\} \subseteq {\nabla}_j$$
\noindent 
then we can use, for all $j=1,\ldots,2n$:
${I}_i  =  
\underline{\nabla}_j
[-1,1]$, and 
${O}_j  =  
\overline{\nabla}_j
[-1,1]$.  
We can naturally also use other approximation methods. 
\end{remark}


We now deduce inner and outer-approximations of $R_{\bf p}(f)$:
\begin{theorem}
\label{thm:approx1D}
With the hypotheses of Proposition \ref{prop:nonlin} on sets ${I}_j$ and ${O}_j$, and denoting $\sum {\bfm A}$, for $A$ any vector of reals, the sum of all its components, we have: 
\begin{multline*}
f(0,\ldots,0)+ \left[\sum\limits_{k=1}^n \sum\left(\overline{\bfm O}_{2k-1}+\underline{\bfm I}_{2k}\right), \right.\
\left. \sum\limits_{k=1}^n \sum\left(\overline{\bfm I}_{2k}+\underline{\bfm O}_{2k-1}
\right)\right] 
\subseteq R_{\bfm p}(f)
%
\end{multline*}
\noindent if  
$
\sum\overline{\bfm O}_{2l-1}-\sum\underline{\bfm O}_{2l-1} \leq  
\sum\limits_{k=l}^n \sum\left(\overline{\bfm I}_{2k}-\underline{\bfm I}_{2k}\right)
-\sum\limits_{k=l+1}^{n} \sum\left(\overline{\bfm O}_{2k-1} - \underline{\bfm O}_{2k-1}\right)
$ 
for $l=1,\ldots,n$, otherwise the inner-approximation is empty, and: 
\begin{multline*}
R_{\bfm p}(f) 
\subseteq f(0,\ldots,0)+ \left[\sum\limits_{k=1}^n \sum\left(
\overline{\bfm I}_{2k-1}
+\underline{\bfm O}_{2k}
\right), \right. \
\left.
\sum\limits_{k=1}^n \sum\left(
\overline{\bfm O}_{2k}
+\underline{\bfm I}_{2k-1}
\right)\right] 
\end{multline*}
\noindent if  
$
\sum\overline{\bfm I}_{2l-1}-\sum\underline{\bfm I}_{2l-1} \leq  
\sum\limits_{k=l}^n \sum \left(\overline{\bfm O}_{2k}-\underline{\bfm O}_{2k}\right)
-\sum\limits_{k=l+1}^{n} \sum\left(\overline{\bfm I}_{2k-1} - \underline{\bfm I}_{2k-1}\right)
$ 
for $l=1,\ldots,n$, otherwise the outer-approximation is empty. 
\end{theorem}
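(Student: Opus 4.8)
The plan is to derive Theorem~\ref{thm:approx1D} directly from Proposition~\ref{prop:nonlin} by computing the sets $\mathcal{C}(\bfm{O}_1,\bfm{I}_2,\ldots,\bfm{O}_{2n-1},\bfm{I}_{2n})$ and $\mathcal{C}(\bfm{I}_1,\bfm{O}_2,\ldots,\bfm{I}_{2n-1},\bfm{O}_{2n})$ explicitly. The key observation is that $\mathcal{C}(\bfm{A}_1,\ldots,\bfm{A}_{2n})$ is exactly the set $R_{\bfm p}(g)$ for the \emph{affine} function $g(\bfm{x}_1,\ldots,\bfm{x}_{2n}) = \sum_j \alpha_j$ where each $\alpha_j$ ranges over an interval $\bfm{A}_i$ rather than over $[-1,1]$; equivalently, after the trivial reparametrization mentioned in the second Remark following Equation~(\ref{eq:Rf}), $\mathcal{C}(\bfm{A}_1,\ldots,\bfm{A}_{2n})$ is an instance of the setting of Proposition~\ref{prop:affine}. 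So the first step is to restate Proposition~\ref{prop:affine} in the slightly more general form where universally quantified blocks range over boxes $\bfm{O}_{2k-1}$ (resp.\ $\bfm{I}_{2k-1}$) and existentially quantified blocks over boxes $\bfm{I}_{2k}$ (resp.\ $\bfm{O}_{2k}$), and to identify the relevant $\ell_1$-type quantities: for a block $\bfm{A}_i = \prod_j [\underline{A}_j,\overline{A}_j]$, the reachable contribution of $\sum_j \alpha_j$ with $\alpha_j \in [\underline{A}_j,\overline{A}_j]$ is exactly the interval $[\sum \underline{\bfm A}_i, \sum \overline{\bfm A}_i]$, of width $\sum(\overline{\bfm A}_i - \underline{\bfm A}_i)$.

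Second, I would run the same fixpoint/induction argument as in the proof of Proposition~\ref{prop:affine} (Appendix~\ref{proof:prop1}), using Lemma~\ref{lemma:lem1} to peel off one block of quantifiers at a time. For the inner-approximation side, $\mathcal{C}(\bfm{O}_1,\bfm{I}_2,\ldots)$: the innermost existential block $\bfm{I}_{2n}$ contributes the interval $[\sum\underline{\bfm I}_{2n}, \sum\overline{\bfm I}_{2n}]$; taking the intersection over $\bfm{O}_{2n-1}$ shrinks this by $\sum(\overline{\bfm O}_{2n-1}-\underline{\bfm O}_{2n-1})$ symmetrically around its center shift, yielding a nonempty interval iff the width condition holds; then one alternately unions (for $\exists$) and intersects (for $\forall$), each union/intersection translating into addition/subtraction of the corresponding block widths to the endpoints. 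Tracking the centers gives $f(0,\ldots,0) + \sum_k (\sum\overline{\bfm O}_{2k-1} + \sum\underline{\bfm I}_{2k})$ for the lower endpoint and the mirrored expression for the upper; tracking the widths and requiring them nonnegative at every peeling stage (not merely at the end) yields precisely the stated family of inequalities indexed by $l=1,\ldots,n$. The outer-approximation $\mathcal{C}(\bfm{I}_1,\bfm{O}_2,\ldots,\bfm{I}_{2n-1},\bfm{O}_{2n})$ is handled identically with the roles of $\bfm I$ and $\bfm O$ swapped. Finally, translating by $f(0,\ldots,0)$ and invoking the sandwich~(\ref{app}) of Proposition~\ref{prop:nonlin} gives the two inclusions of the theorem.

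The one genuine subtlety, which I expect to be the main obstacle, is bookkeeping the \emph{intermediate} nonemptiness conditions correctly. When we peel quantifiers from the inside out, after processing blocks $2l,2l-1,\ldots,2n$ we have an interval whose width is $\sum_{k=l}^n \sum(\overline{\bfm I}_{2k}-\underline{\bfm I}_{2k}) - \sum_{k=l+1}^n \sum(\overline{\bfm O}_{2k-1}-\underline{\bfm O}_{2k-1})$ (for the inner case), and the next step --- intersecting over the universal block $\bfm{O}_{2l-1}$ --- is only legitimate (i.e.\ keeps the set nonempty) if this quantity is at least $\sum\overline{\bfm O}_{2l-1}-\sum\underline{\bfm O}_{2l-1}$; once the set becomes empty at some stage it stays empty, so the full set $R_{\bfm p}$ is nonempty iff \emph{all} these inequalities hold. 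One must also be careful that $\mathcal{C}$ is defined with a fixed center (each $\alpha_j$ ranges over an interval not necessarily symmetric about $0$), so unlike the symmetric $[-1,1]$ case of Proposition~\ref{prop:affine} the center of the resulting interval genuinely drifts; verifying that the drift telescopes into the clean sums $\sum_k(\overline{\bfm O}_{2k-1}+\underline{\bfm I}_{2k})$ etc.\ is a short but attention-demanding computation. Everything else is routine given Proposition~\ref{prop:nonlin}, Proposition~\ref{prop:affine}, and Lemma~\ref{lemma:lem1}.
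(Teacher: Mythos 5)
Your proposal is correct and follows essentially the same route as the paper: the paper's proof likewise identifies $\mathcal{C}(\bfm{A}_1,\ldots,\bfm{A}_{2n})$ with an affine quantified problem $S_p(\cdot)$ via the rescaling $\alpha_j = \frac{\overline{A}_j+\underline{A}_j}{2} + \frac{\overline{A}_j-\underline{A}_j}{2}x_j$, applies Proposition~\ref{prop:affine} to read off the endpoints and the family of nonemptiness conditions, simplifies using $\frac{\overline{O}_j+\underline{O}_j}{2}+\frac{1}{2}|\overline{O}_j-\underline{O}_j|=\overline{O}_j$ (and its mirror), and concludes with the sandwich~(\ref{app}) of Proposition~\ref{prop:nonlin}. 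Whether one re-runs the peeling induction on general boxes, as you sketch, or rescales to $[-1,1]$ and invokes Proposition~\ref{prop:affine} as a black box, as the paper does, is only a presentational difference; your identification of the center drift and of the stagewise width conditions matches the paper's computation.
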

\begin{proof}
The proof uses Proposition \ref{prop:affine} on $C({\bfm O}_1,{\bfm I}_2,\ldots,{\bfm O}_{2n-1},{\bfm I}_{2n})$ and $C({\bfm I}_1,{\bfm O}_2,$ $\ldots,{\bfm I}_{2n-1},{\bfm O}_{2n})$, after rescaling of the interval ${\bfm O}_i$ and ${\bfm I}_j$ to $[-1,1]$, and Proposition \ref{prop:nonlin}. It is detailed in Section \ref{proof:approx1D}.
\end{proof}




In Example~\ref{ex:dubtaylor}, we now apply Theorem \ref{thm:approx1D} to the $x$ component of the Dubbins vehicle as expressed in Example \ref{ex:Dubbins3}, ignoring any constraint on $y$ and $\theta$. 
\begin{example}[Dubbins vehicle (continued)]
\label{ex:dubtaylor}
We recall that: 
$$x(t)=0.1\epsilon_1+(1+0.01\epsilon_2)t+(1.31 \ 10^{-7} \epsilon_3) t^2$$
\noindent The $\nabla_k$, outer-approximations of the absolute value of the partial derivatives of Remark \ref{rem:rk2},  $\frac{\partial x}{\partial \epsilon_k}$ and 
$\frac{\partial x}{\partial t}$ evaluated between times $t=0$ and $t=0.5$, are: 
$\nabla_{\epsilon_1}=0.1$, $\nabla_{\epsilon_2}=[0,0.005]$, $\nabla_{\epsilon_3}=[0,3.275 \ 10^{-8}]$,
$\nabla_{t}=[0.989999869,1.010000131]$. We thus have $I_{\epsilon_1}=O_{\epsilon_1}=[-0.1,$ $0.1]$, $I_{\epsilon_2}=0$, $O_{\epsilon_2}=[-0.005,0.005]$, and  $I_{\epsilon_3}=0$, $O_{\epsilon_3}=[-3.275 \   10^{-8},$ $3.275 \ 10^{-8}]$ by a direct application of Remark \ref{rem:rk2}. Note that for computing $I_t$ and $O_t$, we use the generalized mean-value theorem of \cite{gold1} again, but in a slightly different way than in Remark \ref{rem:rk2}, since the point at which we can evaluate the corresponding function is $t=0$, which is the lower bound of the extent of the values of $t$ ($[0,5]$) and not its center as for  other variables. In that case we can compute the tighter bounds:  $I_t=\underline{\nabla}_t [0,0.5]=[0,0.4949999345]$ and  $O_t=\overline{\nabla}_t [0,0.5]=[0,0.5050000655]$. 

The quantified formula of Example \ref{ex:Dubbins2} has only one $\forall$, $\exists$ alternation, the condition of Theorem \ref{thm:approx1D} will involve 
$O_{\epsilon_2}+O_{\epsilon_3}=[-0.005,0.005]+[-3.275 \ 10^{-8},3.275$ $10^{-8}]$ and $I_t=[0,0.4949999345]$.
We see that indeed, $(\overline{O}_{\epsilon_2}-\underline{O}_{\epsilon_3})+(\overline{O}_{\epsilon_3}-\underline{O}_{\epsilon_2})=0.010000066\leq \overline{I}_t-\underline{I}_t=0.4949999345$, 
hence we can compute 
an 
inner-approximation for the $x$ component of $\varphi$. Its lower bound is:
$$\arraycolsep=2pt
    \begin{array}{ccccccc}
    x_c & +\underline{I}_{\epsilon_1} &  
    +\overline{O}_{\epsilon_2} &
    +\overline{O}_{\epsilon_3} &
    +\underline{I}_{t} &&\\ 
    = 0& -0.1 & +0.005 & +3.275 \ 10^{-8} & +0& =& -0.095
    \end{array}$$ 
    \noindent 
    and its upper bound: 
    $$\arraycolsep=2pt
    \begin{array}{ccccccc}
    x_c & +\overline{I}_{\epsilon_1} &  
    +\underline{O}_{\epsilon_2} &
    +\underline{O}_{\epsilon_3} &
    +\overline{I}_{t} &&\\ 
    = 0& +0.1 & -0.005 & -3.275 \ 10^{-8} & +0.4949999345 &=&0.59.
    \end{array}$$
Similarly, we compute an outer-approximation and find 
the following bounds for $R_{\exists \forall \exists}(\varphi_x)$:
$ 
[-0.095, 0.590] 
\subseteq
R_{\exists \forall \exists}(\varphi) 
\subseteq 
[-0.1, 0.605]
$,  
to be compared with the solution from Mathematica quantifier elimination $-0.1\leq x\leq 0.595$. 
Sampling also yields estimate 
$[-0.1, 0.595]$.  



\end{example}

In the former example, and in general for continuous-time controlled systems defined by a flow function $\varphi(t;x_0,u,w)$ solution of an initial value problem, we do not need as with quantifier elimination techniques to first compute  polynomial approximations. We only need to compute outer approximations of the flow for one initial condition ("central trajectory") and of the Jacobian of the flow for the set of initial conditions, as exemplified below. 
\begin{example}[Dubbins vehicle (continued)]
\label{ex:Dubbins1D}
We consider again the Dubbins vehicle, but defined as the direct solution of the ODEs of Example \ref{ex:Dubbins1}. 
We first compute an outer-approximation of a "central trajectory" $(x_c,y_c,\theta_c)$, i.e. of the trajectory starting at $x=0$, $y=0$, $\theta=0$, 
$b_1=0$ and $a=0$. This gives $x_c=t$, $y_c=0$ and $\theta_c=0$. 

We note that $\frac{\partial x}{\partial t}=cos(\theta)+b_1\in [0.989999965,1.01]$ thus, using notations from Remark \ref{rem:rk2}, we have the inner and outer-approximations of the effect of variable $t$ on the value of $x$,  $I_{x,t}=[0,0.494999982]$, $O_{x,t}=[0,0.505]$, 
and similarly for the other variables: $I_{y,t}=0$, $O_{y,t}=[-sin(0.015)/2,sin(0.015)/2]=[-1.309 \ 10^{-4},1.309 \ 10^{-4}]$ and $I_{\theta,t}=0$, $O_{\theta,t}=[-0.005,0.005]$. 

The Jacobian of $\varphi$ with respect to $x_0$, $y_0$, $\theta_0$, $b_1$ and $a$, 
 $J_{i,x_0}=\frac{\partial \varphi_i}{\partial t}$, $J_{i,y_0}=\frac{\partial \varphi_i}{\partial t}$, $J_{i,\theta_0}=\frac{\partial \varphi_i}{\partial t}$, $J_{i,b_1}=\frac{\partial \varphi_i}{\partial t}$ and $J_{i,a}=\frac{\partial \varphi_i}{\partial t}$, for $i=x, y, \theta$ respectively, 
satisfies a variational equation \cite{hscc2019},  solved in appendix, Section \ref{app:jacob}. 
By Remark \ref{rem:rk2}, this gives the following inner and outer approximations for all parameters $x_0$, $y_0$, $\theta_0$, $a$ and $b_1$, and all components $x$, $y$ and $\theta$ of $\varphi$: 
\begin{itemize}
    \item $I_{x,a}=0$, $O_{x,a}=[-6.545 \ 10^{-7},6.545 \ 10^{-7}]$, $I_{x,x_0}=O_{x,x_0}=[-0.1,0.1]$, $I_{x,\theta_0}=0$, $O_{x,\theta_0}=[-1.309 \ 10^{-6},1.309 \ 10^{-6}]$, $I_{x,b_1}=0$, $O_{x,b_1}=[-0.005,0.0$ $05]$, 
    \item $I_{y,a}=0$, $O_{y,a}=[-0,0025,$ $0.0025]$, 
$I_{y,y_0}=O_{y,y_0}=[-0.1,0.1]$, $I_{y,\theta_0}=0$, $O_{y,\theta_0}=[-0,005,0.005]$, 
\item $I_{\theta,\theta_0}=O_{\theta,\theta_0}=[-0.01,0.01]$, $I_{\theta,a}=0$, $O_{\theta,a}=[0,0.005]$, 
\end{itemize}



We now compute the set $R_{\exists \forall \exists}$ consisting of $z$ such that: 
\begin{multline*} 
\exists a\in [-0.01,0.01], \ \exists x_0 \in [-0.1,0.1], \ \exists y_0 \in [-0.1,0.1], \\ \exists \theta_0\in [-0.01,0.01], \ \forall b_1 \in [-0.01,0.01], \ \exists t \in [0,0.5],\ z=\varphi(t;x_0,y_0,\theta_0,a,b_1)
\end{multline*}

Applying Theorem \ref{thm:approx1D} we find first an  
inner-approximation for $x$ (again, ignoring any condition on $y$ and $\theta$) of $\varphi$. Its lower bound is:
$$\arraycolsep=2pt
    \begin{array}{ccccccc}
    x_c & +\underline{I}_{x,a} &  
    +\underline{I}_{x,x_0} &
    +\underline{I}_{x,y_0} &
    +\underline{I}_{x,\theta_0} &
    +\overline{O}_{x,b_1} & 
    +\underline{I}_{x,t} \\ 
    = 0& -0 & -0.1 & +0 & -0 & +0.005 & +0
    \end{array}$$ 
    \noindent which is equal to -0.095, and its upper bound: 
    $$\arraycolsep=2pt
    \begin{array}{ccccccc}
    x_c & +\overline{I}_{x,a} &  
    +\overline{I}_{x,x_0} &
    +\overline{I}_{x,y_0} &
    +\overline{I}_{x,\theta_0} &
    +\underline{O}_{x,b_1} & 
    +\overline{I}_{x,t} \\
    0& +0 & +0.1 & +0 & +0 & -0.005 & +0.494999982
    \end{array}$$
    \noindent which is equal to 0.589999982.
\noindent Therefore the inner-approximation for $x$ is equal to $[-0.095,0.589999982]$, given that the conditions for the inner-approximation to be non-void are met. 
Similarly, we compute an outer-approximation for the $x$ component of $\varphi$ and find 
$[-0.1000019635,0.6050019635]$.

The approximations for the $y$ and $\theta$ components of $\varphi$ are computed similarly, see Appendix, Section~\ref{app:ytheta} for detailed computation. We obtain for $y$ the inner-approximation $[-0.1,0.1]$ and over-approximation $[0.1076309,0.1076309]$, and for $\theta$ the inner-approximation $[-0.01,$ $0.01]$ and over-approximation $[-0.02,0.02]$.

All these results are very close the the ones obtained in Section \ref{sec:relatedwork} with quantifier elimination\footnote{Note though that the linearization we used for simplifying formulas given to a quantifier elimination tool are slightly over-approximated (especially in the $y$ component).}, but are obtained here with a much smaller complexity. 
\end{example}

\section{Approximations in the case of vector-valued functions}


\label{nD}

Outer-approximations in the general case when $f$ goes from $\R^p$ to $\R^m$ for any strictly positive value of $m$ are directly obtained by the Cartesian product of the ranges obtained separately by the method of Section \ref{gen1D} on each component of $f$. The case of inner-approximations is more involved, since a Cartesian product of inner-approximations is not in general an inner-approximation. 

In this section, we generalize the method of \cite{lcss20} to the case of arbitrary quantified formulas. 
We begin by a simple example, before stating the result for the general case in Theorem \ref{thm:approxnD}.

\begin{example}
Suppose we want to inner approximate the following set $R_{\forall \exists \forall \exists}(f)$ for a function $f$ with two components $f_1$ and $f_2$: 
\[
R_{\forall \exists \forall \exists}(f) = \{ z \ | \ \forall x_1, \ \exists x_2, \ \exists x_3, \ \forall x_4, \ \exists x_5, \ \exists x_6,  z= f(x) \}. 
\label{eqn:quantex}
\]

The main idea is that we can rely on the conjunction of quantified formulas for each component if no variable is existentially quantified for several components. 
We thus transform if necessary the quantified formula by strengthening them for that objective, which is sound with respect to computing inner-approximations. 
For example here, we can interpret, for all $z_1$ and $z_2$: 
\begin{align}
& \forall x_1, \ \forall x_2, \ \exists x_3, \ \forall x_4, \ \forall x_5, \ \exists x_6, \ z_1  =  f_1(x_1,x_2,x_3,x_4,x_5,x_6) \label{eqn:z1bis} \\
& \forall x_1, \ \forall x_3, \ \exists x_2, \ \forall x_4, \ \forall x_6, \ \exists x_5, \ z_2  =  f_2(x_1,x_2,x_3,x_4,x_5,x_6) \label{eqn:z2bis}
\end{align}
Then we get Skolem functions:
$x_3(z_1,x_1,x_2)$ and $x_6(z_1,x_1,x_2,x_3,x_4,$ $x_5)$ from Equation (\ref{eqn:z1bis}), such that 
     $   z_1=f_1(x_1,x_2,x_3(z_1,x_1,x_2),x_4,x_5,x_6(z_1,x_1,x_2,x_3,$ $x_4,x_5))    $ 
and $x_2(z_2,x_1,x_3)$ and $x_5(z_2,x_1,x_2,x_3,x_4,x_6)$ from (\ref{eqn:z2bis}), such that  
    $
        z_2=f_2(x_1,$ $x_2(z_2,x_1,x_3),x_3,x_4,x_5(z_2,x_1,x_2,x_3,x_4,x_6),x_6).
    $ 
Supposing that $f_1$ and $f_2$ are elementary functions, these Skolem functions can be chosen to be continuous \cite{gold1,gold2}.
Consider now functions $g_{z_1,z_2}: \R^6 \rightarrow \R^6$ defined by 
\begin{multline*}
    g(x_1,x_2,x_3,x_4,x_5,x_6)=(x_1,x_2(z_2,x_1,x_3),x_3(z_1,x_1,x_2),x_4,\\
    x_5(z_2,x_1,x_2,x_3,x_4,x_6),x_6(z_1,x_1,x_2,x_3,x_4,x_5))
\end{multline*}
\noindent for all $(z_1,z_2)\in \z_1\times \z_2$. This is a continuous function as composition of continuous functions, from $\x=\x_1\times \x_2 \times \ldots \times \x_6$ to itself. 

By Brouwer's fixpoint theorem, we have fixpoints $x^\infty_3(z_1,z_2,$ $x_1)$, $x^\infty_6(z_1,z_2,$ $x_1,x_4)$, $x^\infty_2(z_1,z_2,x_1)$ and $x^\infty_5(z_1,z_2,x_1,x_4)$, for all values of $z_1$, $z_2$, $x_1$, $x_4$ ($x_1$ and $x_4$ being the existentially quantified input variables of Equation (\ref{eqn:quantex})),  
such that $x^\infty_3(z_1,z_2,x_1)=x_3(z_1,x_1,x^\infty_2(z_1,$ $z_2,x_1))$, $x^\infty_6(z_1,z_2,x_1,x_4)=x_6(z_1,x_1,x^\infty_2(z_1,$ $z_2,x_1),x^\infty_3(z_1,z_2,x_1),$ $x_4, x^\infty_5(z_1,z_2,x_1,x_4))$, $x^\infty_2(z_1,z_2,x_1)=x_2(z_2,x_1,x^\infty_3(z_1,$ $z_2,x_1))$ and $x^\infty_5(z_1,z_2,x_1,x_4)=x_5(z_2,x_1,x^\infty_2(z_1,z_2,x_1,x_4),x^\infty_3(z_1,z_2,x_1,x_4),$ $x_4,$ $x^\infty_6(z_1,z_2,x_1,$ $x_4))$. 
This implies that for all $(z_1,z_2)\in \z$ and for all $x_1$, $x_4$:
$$\begin{array}{rcl}
    z_1 & = & f_1(x_1,x^\infty_2(z_1,z_2,x_1),x_3^\infty(z_1,x_1,x_4),x_4,x^\infty_5(z_1,z_2,x_1,x_4),
x_6^\infty(z_1,z_2,x_1,x_4))\\
    z_2&=&f_2(x_1,x^\infty_2(z_1,z_2,x_1),x^\infty_3(z_1,z_2,x_1),x_4,x^\infty_5(z_1,z_2,x_1,x_4),
x^\infty_6(z_1,z_2,x_1,x_4))
    \end{array}$$
allowing to deduce an inner-approximation of $R_{\forall \exists \forall \exists}(f)$ since: 
\begin{multline*}
\forall z, \ \forall x_1, \ \forall x_4, \ \exists x_2=x_2^\infty(z_1,z_2,x_1), \ \exists x_3=x_3^\infty(z_1,z_2,x_1), \\ \exists x_5=x_5^\infty(z_1,z_2,x_1,x_4), \ \exists x_6=x_6^\infty(z_1,z_2,x_1,x_4), \ z=f(x_1,x_2,x_3,x_4,x_5,x_6)
\end{multline*}
\noindent is equivalent to: 
$\forall z, \ \forall x_1, \ \exists x_2=x_2^\infty(z_1,z_2,x_1), \ \exists x_3=x_3^\infty(z_1,z_2,x_1), \ \forall x_4, \\ \exists x_5=x_5^\infty(z_1,z_2,x_1,x_4), \ \exists x_6=x_6^\infty(z_1,z_2,x_1,x_4), \ z=f(x_1,x_2,x_3,x_4,x_5,x_6).$
\end{example}


In Theorem \ref{thm:approxnD}, we  formalize this for any number of quantifier alternations and dimension for $\z$. The principle is similar to the approach used in  \cite{lcss20} for the joint range in the case of $\forall \exists$ formulas. 



 As previously, we 
 are going to solve the quantified problem $R_{\bfm p}(f)$ where sets
 $J_A^i=\{k_{2i-1}+1,\ldots k_{2i}\}$ and $J_E^i=\{k_{2i}+1,\ldots,k_{2i+1}\}$ for $i=1,\ldots, n$ define the $n$ sequences of indices of variables that are universally quantified ($J_A^i$) and existentially quantified ($J_E^i$). 
 
 The principle is to choose for each existentially quantified variable $x_j$ a unique component of $f$ (among the $m$ ones) that will be used with an existential quantifier as one of the $m$ scalar quantified problems to solve. In the $m-1$ remaining quantified problem, $x_j$ will be universally quantified. This choice is described by the functions $\pi^i$ in Theorem \ref{thm:approxnD}. There are $n$ such functions, one for each existential block appearing in the quantified problem $R_{\bfm p}(f)$. This is  Theorem 3 of \cite{lcss20} generalized to arbitrary alternation of quantifiers $\forall \exists$. 


\begin{theorem}
\label{thm:approxnD}
Let $f: \R^u \rightarrow \R^m$  be an elementary function and $\pi^i : \{k_{2i}+1, \ldots, k_{2i+1}\} \rightarrow \{1, \ldots, m\}$
for $i=1,\ldots,n$. 
Let us note, for all  $i \in \{1, \ldots n\}$, $j \in \{1,\ldots,m\}$
$J_{E,z_j}^i = \{l \in \{k_{2i}+1, \ldots, k_{2i+1}\}, \; \pi^i(l) = j\}$ and $J_{A,z_j}^i = \{k_{2i-1}+1, \ldots, k_{2i}\} \setminus J_{E,z_i}$. 
Consider the following $m$ quantified problems, $j \in \{1,\ldots,m\}$: 
\begin{multline*}
\forall z_j \in \z_j, \:  (\forall \x_l \in [-1,1])_{l \in J_{A,z_j}^1}, \:  (\exists \x_l \in [-1,1])_{l \in J_{E,z_j}^1}, \ldots \\
\:  (\forall \x_l \in [-1,1])_{l \in J_{A,z_j}^n}, \:  (\exists x_j \in [-1,1])_{l \in J_{E,z_j}^n},
\: z_i = f_i(x_1,\ldots,x_{k_{2n}}) 
\end{multline*}
Then  $\z = \z_1 \times \z_2 \times \ldots \times \z_n$, if non-empty, is an inner-approximation of $R_{\bfm p}(f)$ defined in Equation (\ref{eq:Rf}).
\end{theorem}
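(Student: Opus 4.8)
The plan is to reproduce, in full generality, the mechanism of the worked example: Skolemize each of the $m$ strengthened scalar problems, assemble the Skolem functions into a self-map of a product of cubes, and use Brouwer's theorem to obtain, for every target point $z\in\z$ and every choice of the universally quantified inputs, a single tuple of existential witnesses that works for all $m$ components at once. If $\z=\z_1\times\cdots\times\z_m$ is empty the claim is vacuous, so assume it is non-empty and fix $z=(z_1,\ldots,z_m)\in\z$. First I would record the effect of the strengthening: since each $f_j$ is elementary, membership $z_j\in\z_j$ gives, via \cite{gold1,gold2}, \emph{continuous} Skolem functions $\xi_l^{(j)}$, one for each $l\in J_{E,z_j}^i$, $i=1,\ldots,n$, whose arguments are $z_j$ together with exactly those variables that precede $x_l$ in the prenex order of the $j$-th strengthened problem — i.e.\ all original universal variables in blocks up to $x_l$ and all original existential variables occurring before $x_l$ (those assigned to $j$ appearing under an $\exists$, the others under the inserted $\forall$). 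By construction, once each $x_l$ with $\pi^i(l)=j$ is set to $\xi_l^{(j)}(\cdots)$, one has $z_j=f_j$ evaluated on the full argument tuple.

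Next I would build the witnesses for the original existential blocks inductively in $i=1,\ldots,n$. Suppose $\x_1,\x_3,\ldots,\x_{2i-1}$ are given and $\x_2^\infty,\ldots,\x_{2i-2}^\infty$ have already been produced, with $\x_{2r}^\infty$ depending only on $z$ and $\x_1,\ldots,\x_{2r-1}$. On the cube $\prod_{l\in J_E^i}[-1,1]$ define a map $g$ by replacing each coordinate $x_l$ (say $\pi^i(l)=j$) with the value of $\xi_l^{(j)}$ evaluated at $z_j$, the fixed universals, the already-resolved earlier existential blocks, and the current values of the block-$i$ coordinates preceding $x_l$. This is well defined because $\xi_l^{(j)}$ never sees block-$i$ coordinates after $x_l$ nor any later universal, and it is continuous as a composition of continuous functions; since it maps the compact convex cube to itself, Brouwer's fixed-point theorem yields a fixed point, which I call $\x_{2i}^\infty(z,\x_1,\ldots,\x_{2i-1})$. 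Carrying this through $i=1,\ldots,n$ produces the whole family with the required dependency pattern.

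Finally I would assemble $x^\infty=(\x_1,\x_2^\infty,\x_3,\ldots,\x_{2n}^\infty)$ and check it is a simultaneous preimage: for each $j$, every $x_l$ with $\pi^i(l)=j$ equals $\xi_l^{(j)}$ evaluated at the coordinates of $x^\infty$ preceding it, which is precisely the Skolem relation forcing $z_j=f_j(x^\infty)$; hence $z=f(x^\infty)$. Because $\x_{2i}^\infty$ depends only on $z$ and $\x_1,\ldots,\x_{2i-1}$, this selection can be read off in the interleaved order $\forall z\in\z,\ \forall\x_1,\ \exists\x_2=\x_2^\infty(z,\x_1),\ \forall\x_3,\ \exists\x_4=\x_4^\infty(z,\x_1,\x_3),\ \ldots,\ \exists\x_{2n}=\x_{2n}^\infty(z,\x_1,\ldots,\x_{2n-1}),\ z=f(x)$, which is exactly the statement $\z\subseteq R_{\bfm p}(f)$ from Equation~(\ref{eq:Rf}).

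The main obstacle is the index bookkeeping of the first step: one has to verify precisely that the argument list of each Skolem function $\xi_l^{(j)}$ is confined to $z_j$ and to variables lying strictly earlier in the \emph{original} alternation, since this is what makes the block-wise map $g$ well defined and, crucially, what lets the fixed points be organized with the dependency structure needed to interleave the quantifiers in the last step. This is the combinatorial core the paper itself flags as hard to apprehend; unwinding the sets $J_{A,z_j}^i$ and $J_{E,z_j}^i$ it is routine, but it must be done carefully — a single misplaced dependency would collapse the alternating formula to a $\Pi_2$-type one and break soundness.
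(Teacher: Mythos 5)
Your proof is correct and follows essentially the same route as the paper's: continuous Skolem selections for the $m$ strengthened per-component problems (from the elementarity of $f$), assembled into a self-map of a cube whose Brouwer fixed point is a simultaneous preimage read off in the original interleaved quantifier order. The one substantive difference is that you apply Brouwer once per existential block, by induction on $i$, whereas the paper applies it once globally on all existential coordinates and then merely ``observes'' that the fixed point has the right dependency pattern; your block-wise version actually establishes that $\x_{2i}^\infty$ depends only on $z$ and $\x_1,\ldots,\x_{2i-1}$, so it is the more rigorous rendering of the same idea. One small imprecision: your claim that $\xi_l^{(j)}$ ``never sees block-$i$ coordinates after $x_l$'' is not quite right --- it does see the block-$i$ coordinates assigned to components $j'\neq j$ regardless of index order (e.g.\ $x_5(z_2,\ldots,x_6)$ in the paper's worked example), since those are universally quantified before $\exists x_l$ in the $j$-th strengthened problem; this does not affect well-definedness of your block map (it is exactly why Brouwer is needed within a block), but the justification should say ``no variable of any block beyond $2i$'' instead.
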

The proof is a generalization of the example given in the beginning of this section, and is detailed in Section \ref{proof:approxnD}.

\begin{remark}
It is possible to include skewed boxes that can be much tighter than boxes as in Theorem \ref{thm:approxnD}, using similar ideas as in \cite{ADHS}. 
There are also simple heuristics to be used that allows us not to go through the combinatorics of potential choices, for getting the best possible inner-approximation. The sensitivity of the output to variables is computed as part of our algorithm and the best choices of quantifiers are the ones which quantify universally the variables for which there is lower sensitivity, and which quantify existentially the variables for which there is higher sensitivity, giving higher contributions to the inner-approximations. 
\end{remark}

\begin{example}
\label{ex:linearjoinrange}
Consider the function $f=(f_1,f_2): \R^4 \rightarrow \R^2$:
$$\begin{array}{rcl}
f_1(x_1,x_2,x_3,x_4) & = & 2+2x_1+x_2+3x_3+x_4\\
f_2(x_1,x_2,x_3,x_4) & = & -1-x_1-x_2+x_3+5x_4
\end{array}$$
We want to find the disturbance set
\begin{multline}
    R_{\exists \forall \exists} (f)=\{z \in \R^2 | \exists x_1
\in [-1,1], \ \forall x_2 \in [-1,1], \ \exists x_3 \in [-1,1], \\ \exists x_4 \in [-1,1], \ z=f(x_1,x_2,x_3,x_4)\} \label{eqn:disturbex}
\end{multline}

An outer-approximation for $R_{\exists \forall \exists}(f)$ is found to be $[-3,7]\times [-7,5]$, using a the 1D computation of the previous section, one component at a time. 

Now, there are several possible quantified formulas giving a 2D inner-approxi\-mation. 
One of them is, 
\begin{align}
& \framebox{$\exists x_1$}, \ \forall x_2, \ \forall x_4, \ \framebox{$\exists x_3$}, \ z_1  =  f_1(x_1,x_2,x_3,x_4) \label{eqn:z1ter} \\
& \forall x_1, \ \forall x_2, \ \forall x_3, \ \framebox{$\exists x_4$}, \ z_2  =  f_2(x_1,x_2,x_3,x_4) \label{eqn:z2ter}
\end{align}
The conditions of Proposition  \ref{prop:affine} for obtaining a non-empty inner-approxima\-tion are met and we get 
for Equation (\ref{eqn:z1ter}):
$$\arraycolsep=1pt
    \begin{array}{ccccccccc}
    {[} &z_1^c & -||\Delta_{x_1}|| & +||\Delta_{x_2,x_4}|| & -||\Delta_{x_3}|| , &z_1^c & +||\Delta_{x_1}|| & -||\Delta_{x_2,x_4}|| & +||\Delta_{x_3}|| ] \\
    ={[}&2 & -2 &+1 +1 & -3, & 2 & +2 & -1 -1 & +3]
    \end{array}$$ 
    \noindent 
    which is equal to $[-1,5]$, and 
for Equation (\ref{eqn:z2ter}):
\end{example}

\begin{wrapfigure}[13]{r}{0.3\textwidth}
\tikzset{dot/.style={circle,inner sep=1pt,fill,label={\small #1},name=#1}}
\tikzset{axis/.style={thick, black, -latex, shorten <=-\nudge cm, shorten >=-2*\nudge cm}}
\def\nudge{.5}
\begin{tikzpicture}[scale=0.18] 
\draw[very thin,color=gray] (-6.,-9) grid (10.,8.);    
\draw[axis] (-6,0) -- (9,0) node[right=2* \nudge cm] {\(z_1\)};
\draw[axis] (0,-9) -- (0,7) node[above=2*\nudge cm] {\(z_2\)};

\def\a{(-5,-5)}
\def\b{(-3,5)}
\def\c{(1,-3)}
\def\d{(3,7)}

\def\e{(-3,-7)}
\def\f{(-1,3)}
\def\g{(3,-5)}
\def\h{(5,5)}

\def\i{(-1,-7)}
\def\j{(1,3)}
\def\k{(5,-5)}
\def\l{(7,5)}

\def\m{(1,-9)}
\def\n{(3,1)}
\def\o{(7,-7)}
\def\p{(9,3)}

\node[dot=$z^0$] (z0) at \a {};
\node[dot=$z^1$] (z1) at \b {};
\node[dot=$z^2$] (z2) at \c {};
\node[dot=$z^3$] (z3) at \d {};

\node[dot=$z^4$] (z4) at \e {};
\node[dot=$z^5$] (z5) at \f {};
\node[dot=$z^6$] (z6) at \g {};
\node[dot=$z^7$] (z7) at \h {};

\node[dot=$z^8$] (z8) at \i {};
\node[dot=$z^9$] (z9) at \j {};
\node[dot=$z^{10}$] (z10) at \k {};
\node[dot=$z^{11}$] (z11) at \l {};

\node[dot=$z^{12}$] (z12) at \m {};
\node[dot=$z^{13}$] (z13) at \n {};
\node[dot=$z^{14}$] (z14) at \o {};
\node[dot=$z^{15}$] (z15) at \p {};
  
\def\ZA{\a -- \b -- \d -- \c -- cycle} 
\def\ZB{\e -- \f -- \h -- \g -- cycle} 
\def\ZC{\i -- \j -- \l -- \k -- cycle} 
\def\ZD{\m -- \n -- \p -- \o -- cycle} 

  


\coordinate (I1) at (intersection of z0--z2 and z4--z5); 
\coordinate (I2) at (intersection of z8--z10 and z12--z13); 
\coordinate (I3) at (intersection of z10--z11 and z13--z15);
\coordinate (I4) at (intersection of z5--z7 and z2--z3);
\filldraw[very thick,orange,fill opacity=0.3] \f--(I1)--(I2)--\k--(I3)--(I4)--cycle;

\filldraw[very thick,red,fill opacity=0.3] (-1,-3) -- (5,-3) -- (5,1) -- (-1,1) -- cycle;

\filldraw[very thick,green,fill opacity=0.1] (-3,-7) -- (7,-7) -- (7,5) -- (-3,5) -- cycle;


\end{tikzpicture}
\end{wrapfigure}
$$\arraycolsep=1pt\begin{array}{ccccccc}
    {[} &z_2^c & +||\Delta_{x_1,x_2,x_4}|| & -||\Delta_{x_3}|| , &z_1^c & -||\Delta_{x_1,x_2,x_4}|| & +||\Delta_{x_3}|| ] \\
    ={[}&-1 & +1 +1+1 &-5, & -1 & -1-1-1 & +5]
    \end{array}$$ 
    \noindent 
which is equal to $[-3,1]$. 
Hence $[-1,5]\times[-3,1]$ is in the set $R_{\exists \forall \exists}(f)$. 

These inner and outer-approximations, together with the exact robust joint range, are depicted in the figure on the right-hand side: 
we represented some particular points of the image by $z^1$ to $z^{13}$; the inner and outer boxes represent the inner and outer-approximations $[-1,5]\times[-3,1]$ and $[-3,7]\times [-7,5]$; finally the polyhedron lying in between is the exact robust image.
Other possibilities are discussed in Appendix \ref{sec:appnew}. 


\begin{example} [Generalized robust reachability for the Dubbins vehicle] 
We consider the following problem, which is a slight space relaxation of the original problem solved in 1D in Example \ref{ex:Dubbins1D}: 
\begin{multline}
\label{eq:original}
R_{\exists \forall \exists}(\varphi)=\{ (x,y,\theta) \mid \exists a \in [-0.01,0.01], \ \exists x_0 \in [-0.1,0.1], \ \exists y_0 \in [-0.1,0.1], \\ \exists \theta_0 \in [-0.01,0.01], \ \forall b_1 \in [-0.01,0.01], \ \exists t \in [0,0.5], \ \exists \delta_2 \in [-1.31 e^{-4}, 1.31 e^{-4}], \\ \exists \delta_3 \in [-0.005,0.005], \ (x,y,\theta)=\varphi(t;x_0,y_0,\theta_0,a,b_1)+(0,\delta_2,\delta_3)\}
\end{multline} 
\noindent where $\varphi$ is the flow map for the Dubbins vehicle of Example \ref{ex:Dubbins1}. This means we want to characterize precisely which abscissa $x$  can be reached for some control $a$, whatever the disturbance $b_1$. We allow here a relaxation in space and will  determine an inner-approximation of the sets of ordinate $y$ and angle $\theta$ which can be reached with control $a$ whatever disturbance $b_1$, up to a small tolerance of $1.309 \ 10^{-4}$ for the ordinate and $0.005$ for $\theta$. 

The outer-approximation for $R_{\exists \forall \exists}(\varphi)$ is easy to find from the outer-approxi\-mations of each component of $\varphi$ we already computed in Example \ref{ex:Dubbins1D}. We just need to add the extra contributions of $\delta_2$ to $y$ and $\delta_3$ to $\theta$, giving 
\begin{multline*}
R_{\exists \forall \exists}(\varphi) \subseteq
[-0.10000196,0.60500196] \times 
[0.1077618,0.1077618] \times 
[-0.025,0.025]
\end{multline*}

In order to find an inner-approximation of $R_{\exists \forall \exists}(\varphi)$, we interpretet the following quantified formulas (with the same interval bounds as in Equation (\ref{eq:original}) for the inputs): 

$\forall a, \ \forall y_0, \forall \theta_0, \ \framebox{$\exists x_0$}, \ \forall b_1, \ \forall \delta_2, \ \forall \delta_3, \ \framebox{$\exists t$}, \ 
x=\varphi_x(t;x_0,y_0,\theta_0,a,b_1) $ \\
$\forall a, \ \forall x_0, \forall \theta_0, \ \framebox{$\exists y_0$}, \ \forall b_1, \ \forall \delta_3, \ \forall t, \ \framebox{$\exists \delta_2$}, \ 
y=\varphi_y(t;x_0,y_0,\theta_0,a,b_1)+\delta_2 $\\
$\forall x_0, \ \forall y_0, \framebox{$\exists \theta_0, \ \exists a$}, \ \forall b_1, \ \forall \delta_2, \ \forall t, \ \framebox{$\exists \delta_3$}, \ 
\theta=\varphi_\theta(t;x_0,y_0,\theta_0,a,b_1)+\delta_3  \mbox{ and find}$ 
\noindent $[-0.0949993455,0.5899993275] \times [-0.0925,0.0925] \ \times [-0.01,0.01]  \subseteq R_{\exists \forall \exists}(\varphi)
$. 

Note that we were not able to obtain an estimate of the solution of this joint quantified problem (translated using the linearisation for $\varphi$ of Example \ref{ex:Dubbins3}) with Mathematica, it resulted in a timeout. 
\end{example}


\section{Implementation and benchmarks}

\label{sec:bench}

We implemented the method, including the non-linear case of Theorem \ref{thm:approx1D} and the vector-valued case of Theorem \ref{thm:approxnD} in \texttt{Julia}, using packages \texttt{LazySets} for manipulating boxes (\texttt{Hyperrectangles}) and \texttt{Symbolics} for automatic differentiation. 
\begin{table}[h!]
\vskip -0.8cm
\begin{center}
\caption{Benchmark for quantified reachability problems}
\begin{tabular}{|c||c|c|c|c|c|c|c|}
\hline
Benchmark & \# vars & dim & \# alternations & non-linear & time (s) & inner/sample & outer/sample \\
\hline
\hline
Ex11 & 3 & 1 & 2 & \checkmark & 0.29 & 0.33 & 2.12\\
\hline
Ex4 & 4 & 1 & 2 & \checkmark & 0.32 & 1 & 1.03 \\
\hline
Ex7 & 4 & 2 & 2 & & 0.21 & (0.78,0.40) & (1.30,1.21) \\
\hline
Linear-2 & 4 & 1 & 2 & & 0.43 & 1 & 1 \\
\hline
Linear-5 & 10 & 1 & 5 & & 0.4 & 1 & 1 \\
\hline
Linear-10 & 20 & 1 & 10 & & 0.41 & 1 & 1 \\
\hline
Linear-25 & 50 & 1 & 25 & & 0.47 & 1 & 1 \\
\hline
Linear-50 & 100 & 1 & 50 & & 0.58 & 1 & 1 \\
\hline
Linear-100 & 200 & 1 & 100 & & 0.91 & 1 & 1 \\
\hline
Linear-500 & 1000 & 1 & 500 & & 8.1 & 1 & 1 \\
\hline
Linear-1000 & 2000 & 1 & 1000 & & 28.25 & 1 & 1 \\
\hline
Motion-2 & 7 & 1 & 3 & \checkmark & 0.62 & -- & -- \\
\hline
Motion-5 & 14 & 1 & 6 & \checkmark & 0.76 & -- & -- \\
\hline
Motion-10 & 24 & 1 & 11 & \checkmark & 1.06 & -- & -- \\
\hline
Motion-25 & 54 & 1 & 26 & \checkmark & 9.4 & -- & -- \\
\hline
Motion-50 & 104 & 1 & 51 & \checkmark & 148.68 & -- & -- \\
\hline
\end{tabular}
\label{table:bench}
\end{center}
\end{table}

\vskip -0.8cm

We ran benchmarks reported in Table \ref{table:bench} on a 
Macbook Pro 2.3GHz Intel core i9 with 8 cores, measuring timings using the \texttt{Benchmark} \texttt{Julia} package. The colums \# vars, dim, \# alternations, non-linear, time, inner/sample, outer/sample denote, respectively, for each benchmark, the number of quantified variables, the dimension of the image of the function considered, the number of alternations $\forall$/$\exists$, whether the function considered is non-linear or not, the time the analyzer took to compute both the inner and the outer-approximation of the quantified reachability problem, the estimated ratio of the width of the inner-approximation, then the outer-approximation, for each component of the function, with respect to the estimate using sampling\footnote{Sampling is too slow and imprecise when the number of variables grows, hence we could not use it in the case of Motion-$k$, $k > 2$. For Motion-2, it terminates but with at most 30 samples per dimension, and in dimension 7, this is not representative. In the case of Linear-$k$, the estimate is always one since our method is exact in 1D, for linear functions.}. 

In this table, Ex$k$ correspond to Example $k$ of this paper, 
Linear-$k$ are random linear functions on $2k$ variables, quantified as $\forall, \ \exists$ $k$ times, and Motion-2 to 50 are several instances of a motion planning problem of the same type as the one of Equation \ref{eq:motionplanning}. Motion-$k$ consists of the $x$ component of the same perturbed Dubbins vehicle as modeled in Example \ref{ex:Dubbins1}, see Appendix \ref{bench:motionplanning}, with $k$ control steps, generating $3+2k$ variables and $k+1$ quantifier alternations. These variables are the $k$ (angular) controls $a_i$, the $k$ perturbations $b_i$ and the two initial conditions on $x$ and $\theta$. 
The function, from $\R^{4+2k}$ to $\R$, that expresses the dynamics at the $k$th control step is a sum of $2k$ sine functions evaluated on sums of 1 to $k+1$ variables, plus a sum of $k+1$ variables. 












The theoretical complexity of our method, both for inner and for outer-approximation, for a $n$ dimensional vector-valued quantified problem on $p$ quantified variables, is of the order of $n$ times the complexity of a 1D quantified problem on $p$ quantified variables. Each of these 1D problems has a cost of the order of $p$ times the cost of the evaluation of the function on a (center) point plus the cost of evaluation of its Jacobian on an interval. 
In the Linear-$k$ problem, the cost of evaluation of the function on a point is of the order of $k$, and for the Jacobian, apart from the cost of the automatic differentiation, it is of the order of $k$ again. The resolution time can slightly decrease for higher-dimensional problems, which is due to the fact that some of these random problems are found to have empty quantified reachable sets already with few quantifiers. 
In the Motion-$k$ problem, which has always a non-empty quantified reachable set, the cost of evaluation of the function on a point is of the order of $k^2$, and for the Jacobian, it is of the order of $k^3$ without the cost of the automatic differentiation. 



 
\section{Conclusion}

In this article, we designed a method for inner and outer-approxima\-ting general problems, which is essentially an order 0 method, generalizing mean-value theorems. In future work, we are planning on describing higher order methods, generalizing again the higher order methods of \cite{ADHS}.
We will also consider preconditioning and quadrature formulas for general quantified formulas. 

Finally, we intend to generalize this work to other kinds of quantified problems where the objective is to find a set $R$ such that the quantified predicate is $f(\bfm{x}_1,\bfm{x}_2,\ldots,\bfm{x}_{{2n}}) \in R$, and not an equality predicate as in this work. 
This should be most useful for finding generalized invariant sets, in addition to generalized reachable sets.


\newpage


\newpage

\appendix

\section{Mathematica expressions used in the Examples}

\label{sec:mathematica}


The quantified problem for calculating the  outer-approximation of Example \ref{ex:Dubbins3} is expressed in Mathematica as:
\begin{multline*}
\text{Reduce}\left[\exists \{\epsilon_7,\epsilon_1,\epsilon_4,\epsilon_6\},\epsilon_7\geq -1\land \epsilon_7\leq 1\land \epsilon_1\geq -1 \right.\\ 
\land \epsilon_1\leq 1\land \epsilon_4\geq -1\land \epsilon_4\leq 1\land \epsilon_6\geq -1\land \epsilon_6\leq 1, \\ 
\forall {\epsilon_2,\epsilon_2\geq -1\land \epsilon_2\leq 1} \\
\exists {\{t,\epsilon_3,\epsilon_5\},t\geq 0\land t\leq 0.5\land \epsilon_3\geq -1\land \epsilon_3\leq 1\land \epsilon_5\geq -1\land \epsilon_5\leq 1} \\
\left(x=0.1 \epsilon_1+(1+0.01 \epsilon_2)  t+\text{1.31 10$^{-7}$} \epsilon_3 t^2 \right. \\ \land y=0.1 \epsilon_4+(0.01 \epsilon_6+0.01\epsilon_7 t)t+0.05 \epsilon_5 t^2 \\ \left.
\land \left. theta=0.01 \epsilon_6+0.01 \epsilon_7 t\right),\{y,z,u\},\mathbb{R}\right]  
\end{multline*}

For the outer-approximations of each component of $\varphi$, we have the following Mathematica problems, for $\varphi_x$ first: 
\begin{multline*}
\texttt{Timing}[\texttt{Reduce}[\texttt{Exists}[\{e7, e1, e4,e6\}, (e7 >= -1) \&\& (e7 <= 1) \\ 
\&\& (e1 >= -1) \&\& (e1 <= 1) \&\& (e4 >= -1) \&\& (e4 <= 1) \\
\&\& (e6 >= -1) \&\& (e6 <= 1), \texttt{ForAll}[e2, (e2 >= -1) \&\& (e2 <= 1), \\ \texttt{Exists}[\{t,e3,e5\}, (t >= 0) \&\& (t <= 0.5) \\ 
\&\& (e3 >= -1) \&\& (e3 <= 1) \&\& (e5 >= -1) \&\& (e5 <= 1), \\
x == 0.1 e1 + t + 0.01 e2 t + 0.000000131 e3 t^2]]], \{x\}, Reals]]
\end{multline*}

Then for $\varphi_y$: 
\begin{multline*}
\texttt{Timing}[\texttt{Reduce}[\texttt{Exists}[\{e7, e1, e4,e6\}, (e7 >= -1) \&\& (e7 <= 1) \\ 
\&\& (e1 >= -1) \&\& (e1 <= 1) \&\& (e4 >= -1) \&\& (e4 <= 1) \\
\&\& (e6 >= -1) \&\& (e6 <= 1), \texttt{ForAll}[e2, (e2 >= -1) \&\& (e2 <= 1), \\ \texttt{Exists}[\{t,e3,e5\}, (t >= 0) \&\& (t <= 0.5) \\
\&\& (e3 >= -1) \&\& (e3 <= 1) \&\& (e5 >= -1) \&\& (e5 <= 1), \\
y == 0.1 e4+(0.01 e6+0.01 e7 t) t+(0.005 e5) t^2]]], \{y\}, Reals]]
\end{multline*}

And finally for $\varphi_{\theta}$: 
\begin{multline*}
Timing[\texttt{Reduce}[\texttt{Exists}[\{e7, e1, e4,e6\}, (e7 >= -1) \&\& (e7 <= 1) \\ 
\&\& (e1 >= -1) \&\& (e1 <= 1) \&\& (e4 >= -1) \&\& (e4 <= 1) \\
\&\& (e6 >= -1) \&\& (e6 <= 1), \texttt{ForAll}[e2, (e2 >= -1) \&\& (e2 <= 1), \\ \texttt{Exists}[\{t,e3,e5\}, (t >= 0) \&\& (t <= 0.5) \\ \&\& (e3 >= -1) \&\& (e3 <= 1) \&\& (e5 >= -1) \&\& (e5 <= 1), \\
theta == 0.01 e6+0.01 e7 t]]], \{theta\}, Reals]]
\end{multline*}

For the inner-approximations of all components of $\varphi$, first for $\varphi_x$, we have:

\begin{multline*}
Timing[\texttt{Reduce}[\texttt{Exists}[\{e7, e1, e4,e6\}, (e7 >= -1) \&\& (e7 <= 1) \\
\&\& (e1 >= -1) \&\& (e1 <= 1) \&\& (e4 >= -1) \&\& (e4 <= 1) \\
\&\& (e6 >= -1) \&\& (e6 <= 1), \texttt{ForAll}[\{e2,  e3, e5\}, \\ (e2 >= -1) \&\& (e2 <= 1) \&\& (e3 >= -1) \&\& (e3 <= 1) \\ \&\& (e5 >= -1) \&\& (e5 <= 1), 
\texttt{Exists}[\{t\}, (t >= 0) \&\& (t <= 0.5), \\ x == 0.1 e1 + t + 0.01 e2 t + 0.000000131 e3 t^2]]], \{x\}, Reals]]
\end{multline*}

Then for $\varphi_y$: 
\begin{multline*}
Timing[\texttt{Reduce}[\texttt{Exists}[\{e7, e1, e4,e6\}, (e7 >= -1) \\
\&\& (e7 <= 1) \&\& (e1 >= -1) \&\& (e1 <= 1) \&\& (e4 >= -1) \&\& (e4 <= 1) \\
\&\& (e6 >= -1) \&\& (e6 <= 1), \texttt{ForAll}[\{e2,  e3, e5\}, \\
(e2 >= -1) \&\& (e2 <= 1) \&\& (e3 >= -1) \&\& (e3 <= 1) \\
\&\& (e5 >= -1) \&\& (e5 <= 1), \texttt{Exists}[\{t\}, (t >= 0) \&\& (t <= 0.5), \\ y == 0.1 e4+(0.01 e6+0.01 e7 t) t+(0.005 e5) t^2]]], \{y\}, Reals]]
\end{multline*}

And finally for $\varphi_\theta$:
\begin{multline*}
Timing[\texttt{Reduce}[\texttt{Exists}[\{e7, e1, e4,e6\}, (e7 >= -1) \&\& (e7 <= 1) \\
\&\& (e1 >= -1) \&\& (e1 <= 1) \&\& (e4 >= -1) \&\& (e4 <= 1) \\
\&\& (e6 >= -1) \&\& (e6 <= 1), \texttt{ForAll}[\{e2,  e3, e5\}, \\
(e2 >= -1) \&\& (e2 <= 1) \&\& (e3 >= -1) \&\& (e3 <= 1) \\
\&\& (e5 >= -1) \&\& (e5 <= 1), \texttt{Exists}[\{t\}, (t >= 0) \&\& (t <= 0.5), \\
theta == 0.01 e6+0.01 e7 t]]], \{theta\}, Reals]]
\end{multline*}

\section{Proof of Lemma \ref{lemma:lem1}}

\label{proof:lem1}
We distinguish two cases: 
\begin{itemize}
    \item If $Q_1=\forall$, then $z \in S_n(\delta_0; Q_1,\delta_1; \ldots; Q_n, \delta_n)$ iff $\forall x_1 \in [-1,1], \ Q_2 x_{2} \in [-1,1], \ldots, \ Q_n x_{n}  
\in [-1,1], \ z=f(x_1,x_2,$ $\ldots,x_n)$. This is equivalent to, for all $x_1 \in [-1,1]$: $Q_2 x_{2} \in [-1,1], \ldots, \ Q_n x_{n}  
\in [-1,1], \ z=(\delta_0+\delta_1 x_1)+\sum\limits_{i=2}^k \delta_i x_i$, hence $z \in S_{n-1}(\delta_0+\delta_1 x_1; Q_2,\delta_2; \ldots; Q_n, \delta_n)$. 
    Therefore, this is equivalent to 
    $$z \in \bigcap\limits_{x_1\in [-1,1]} S_{n-1}(\delta_0+\delta_1 x_1; Q_2,\delta_2; \ldots; Q_n, \delta_n)$$ 
\item If  $Q_1=\exists$, then $z \in S_n(\delta_0; Q_1,\delta_1; \ldots; Q_n, \delta_n)$ iff $\exists x_1 \in [-1,1], \ Q_2 x_{2} \in [-1,1], \ldots, \ Q_n x_{n}  
\in [-1,1], \ z=f(x_1,x_2,$ $\ldots,x_n)$. This is equivalent to, for some $x_1 \in [-1,1]$: $Q_2 x_{2} \in [-1,1], \ldots, \ Q_n x_{n}  
\in [-1,1], \ z=(\delta_0+\delta_1 x_1)+\sum\limits_{i=2}^k \delta_i x_i$, hence $z \in S_{n-1}(\delta_0+\delta_1 x_1; Q_2,\delta_2; \ldots; Q_n, \delta_n)$ for some $x_1 \in [-1,1]$. 
    Therefore, this is equivalent to 
    $$z \in \bigcup\limits_{x_1\in [-1,1]} S_{n-1}(\delta_0+\delta_1 x_1; Q_2,\delta_2; \ldots; Q_n, \delta_n)$$
\end{itemize}

\section{Proof of Proposition \ref{prop:affine}}

\label{proof:prop1}


Note first that 
\begin{multline}
R_{\bfm p}(f)=
S_{p}(\delta_0; \forall, {\delta}_{k_1+1}; \ldots; \forall, {\delta}_{{k}_2};\exists, {\delta}_{{k}_2+1};\ldots; \\
\exists, {\delta}_{{k}_3}; \ldots; \forall, {\delta}_{k_{2n-1}+1}; \ldots; \forall, {\delta}_{k_{2n}} ; \exists, {\delta}_{k_{2n}+1}; \ldots; \exists, {\delta}_{k_{2n+1}})
\label{defR0}
\end{multline}
\noindent where $k_i=\sum\limits_{l=1}^{i-1} j_l$ for all $i=1,\ldots,2n+1$. 

We use the notation $\Delta_i$ to improve readability of formula~(\ref{defR0}), that we  rewrite:
$R_{\bfm p}(f)=
S_{p}(\delta_0; \forall, {\Delta}_{1}; \exists, {\Delta}_{2};\ldots; 
\forall, {\Delta}_{{2n-1}}; \exists, {\Delta}_{{2n}})$.

The proof uses the induction relation (\ref{induction}) on $S_n$. 
Let us call $P_{2n}$ the property we wish to prove on $R_{\bfm p}(f)$ for any partition ${\bfm p}=(j_1,\ldots,j_{2n})$ of $p=\sum\limits_{i=1}^{2n} j_i$. 
For all partitions ${\bfm p'}=(j'_1,\ldots,j'_{2n-1})$ of $p'=\sum\limits_{i=1}^{2n-1} j'_i$ and associated functions $f': \ \R^{p'}\rightarrow \R^m$ with $f'({\bfm x}_1,{\bfm x}_2,\ldots,{\bfm x}_{2n-1})=\delta'_0+\langle \Delta'_1, {\bfm x}_1\rangle+\langle \Delta'_2, {\bfm x}_2\rangle+\ldots+\langle \Delta'_{2n-1}, {\bfm x}_{2n-1}\rangle$, we define: 
\begin{equation}
T_{\bfm p'}(f')=
S_{p'}(\delta'_0; \exists, {\Delta'}_{1}; \forall, {\Delta'}_{2};\exists \Delta'_3; \ldots; 
\ldots; \forall, {\Delta'}_{{2n-2}}; \exists, {\Delta'}_{{2n-1}})
\label{defR}
\end{equation}
\noindent and we will prove the following property $Q_{2n-1}$: 
\begin{multline*}
T_{\bfm p'}(f')=
\left[\delta'_0+\sum\limits_{k=1}^{n-1} \left(||{\Delta'}_{2k}||-||{\Delta'}_{2k-1}||\right)-||{\Delta'}||_{2n-1} \right.,\\
\left.\delta'_0+\sum\limits_{k=1}^{n-1} \left(||{\Delta'}_{2k-1}||-||{\Delta'}_{2k}||\right)+||{\Delta'}||_{2n-1}\right] 
\end{multline*}
\noindent 
if 
$||{\Delta'}_{2l}|| \leq  ||{\Delta'}_{2l+1}|| + \sum\limits_{k=l+1}^{n-1} \left(||{\Delta'}_{2k+1}||-||{\Delta'}_{2k}||\right)$ for $l=1,\ldots,n-1$, 
\noindent otherwise $T_{\bfm p'}(f)=\emptyset$. 

We first have the base case $P_{0}$ and $Q_{-1}$, which are both equal to $[\delta_0,\delta_0]$. 


We now suppose $Q_{2n-1}$ and $P_{2n}$ and   prove $Q_{2n+1}$ and $P_{2n+2}$. Consider first the case of $Q_{2n+1}$ and, for a function $f': \ \R^{p'} \rightarrow \R^m$, $f'({\bfm x}_1,\ldots,{\bfm x}_{2n+1})=\delta'_0+\sum\limits_{i=1}^{2n+1} \langle \Delta'_i,{\bfm x}_i \rangle$ with a partition ${\bfm p'}=(j'_1,\ldots,j'_{2n+1})$ of the 
arguments of $f'$, consider the set: 
\[
T_{\bfm p'}(f')=
S_{{p'}}(\delta'_0; \exists, {\Delta'}_{1};\forall, {\Delta'}_{2};
\exists, {\Delta'}_{3}; \ldots ; 
\forall, {\Delta'}_{2n}; \exists, {\Delta'}_{{2n+1}}).
\]
By the universal quantifier case of Lemma \ref{lemma:lem1} applied $j_1$ times, this is equal to:
\[
\bigcup\limits_{{\bfm x}_1\in [-1,1]^{j_1}} S_{p''}(\delta'_0+\langle \Delta'_1,{\bfm x}_1\rangle ; \forall, {\Delta'}_{2};\exists, {\Delta'}_{3};\ldots; 
\forall, {\Delta'}_{2n}; \exists, {\Delta'}_{{2n+1}})
\]
\noindent where $S_{p''}(\ldots) = R_{\bfm p''}(f'')$  with partition ${\bfm p''}=(j'_2,\ldots,j'_{2n+1})$, $p''=\sum\limits_{i=2}^{2n+1} j'_i$, and $f''({\bfm x}_2,\ldots,{\bfm x}_{2n+1})=f'({\bfm x}_1,{\bfm x}_2,$ $\ldots,{\bfm x}_{2n+1})$. 

Then by the induction hypothesis $P_{2n}$ applied to partition $\bfm p''$, each $S_{p''}(\ldots)$ is equal to: 
\[
\delta'_0+\langle \Delta'_1,{\bfm x}_1 \rangle+
\left[\sum\limits_{k=1}^n \left(||{\Delta}_{2k-1}||-||{\Delta}_{2k}||\right)\right.,
\left.\sum\limits_{k=1}^n \left(||{\Delta}_{2k}||-||{\Delta}_{2k-1}||\right)\right] 
\]
\noindent if $||{\Delta}_{2l-1}|| \leq  ||{\Delta}_{2l}|| + \sum\limits_{k=l+1}^n \left(||{\Delta}_{2k}||-||{\Delta}_{2k-1}||\right)$ for $l=1,\ldots,n$, with $\Delta_i=\Delta'_{i+1}$, $i=1,\ldots,2n$, otherwise is equal to the empty set. Substituting $\Delta_i$ by $\Delta'_{i+1}$, we obtain for each $S_{p''}(\ldots)$:
\[
\delta'_0+\langle {\Delta'}_1,x_1\rangle+\left[\sum\limits_{k=1}^n \left(||{\Delta'}_{2k}||-||{\Delta'}_{2k+1}||\right)\right.,
\left.\sum\limits_{k=1}^n \left(||{\Delta'}_{2k+1}||-||{\Delta'}_{2k}||\right)\right] 
\]
\noindent if 
$||{\Delta'}_{2l}|| \leq  ||{\Delta'}_{2l+1}|| + \sum\limits_{k=l+1}^n \left(||{\Delta'}_{2k+1}||-||{\Delta'}_{2k}||\right)$ for $l=1,\ldots,n$, 
otherwise it is empty. 
Finally, $$\delta'_0-||{\Delta'}_1|| \leq \delta'_0+\langle {\Delta'}_1,{\bfm x}_1\rangle \leq \delta'_0+||{\Delta'}_1||$$ 
\noindent each bound being reached by some ${\bfm x}_1$. Hence, 
\begin{multline*}
T_{\bfm p'}(f')=
\left[\delta'_0-||{\Delta'}_1|| + \sum\limits_{k=1}^{n} \left(||{\Delta'}_{2k}||-||{\Delta'}_{2k+1}||\right) \right.,\\
\left.\delta'_0+||{\Delta'}_1||+\sum\limits_{k=1}^{n} \left(||{\Delta'}_{2k+1}||-||{\Delta'}_{2k}||\right) 
\right] 
\end{multline*}
\noindent if 
$||{\Delta'}_{2l}|| \leq  ||{\Delta'}_{2l+1}|| + \sum\limits_{k=l+1}^n \left(||{\Delta'}_{2k+1}||-||{\Delta'}_{2k}||\right)$ for $l=1,\ldots,n$, 
otherwise $T_{\bfm p'}(f')=\emptyset$. 
This is precisely $Q_{2n+1}$ since 
$-||{\Delta'}_1|| + \sum\limits_{k=1}^{n} \left(||{\Delta'}_{2k}||-||{\Delta'}_{2k+1}||\right)= \sum\limits_{k=1}^{n} \left(||{\Delta'}_{2k}||-||{\Delta'}_{2k-1}||\right)-||{\Delta'}||_{2n+1}$
and 
$||{\Delta'}_1||+\sum\limits_{k=1}^{n} \left(||{\Delta'}_{2k+1}||-||{\Delta'}_{2k}||\right) 
=\sum\limits_{k=1}^{n} \left(||{\Delta'}_{2k-1}||-||{\Delta'}_{2k}||\right)+||{\Delta'}||_{2n+1}$.

The case of $P_{2n+2}$ is similar, using Lemma \ref{lemma:lem1} and property $Q_{2n+1}$: 
suppose again that the induction hypothesis holds, i.e. we suppose $P_{2n}$ and $Q_{2n-1}$,  and we prove $Q_{2n+1}$ and $P_{2n+2}$. We proved $Q_{2n+1}$ and we now examine the case of $P_{2n+2}$ and consider the following set, starting with a function $f: \ \R^{p} \rightarrow \R^m$, $f({\bfm x}_1,\ldots,{\bfm x}_{2n+2})=\delta_0+\sum\limits_{i=1}^{2n+2} \langle \Delta_i,{\bfm x}_i \rangle$ with a partition ${\bfm p}=(j_1,\ldots,j_{2n+2})$ of the $p=\sum\limits_{i=1}^{2n+2} j_i$ arguments of $f$: 
\begin{equation}
R_{\bfm p}(f)=
S_{{p}}(\delta_0; \forall, {\Delta}_{1};\exists, {\Delta}_{2};
 \ldots ; 
\forall, {\Delta}_{2n+1}; \exists, {\Delta}_{{2n+2}})
\end{equation}
By the existential quantifier case of Lemma \ref{lemma:lem1} applied $j_1$ times, this is equal to (we set $p'=\sum\limits_{i=2}^{2n+2} j_i$):
\begin{equation}
\bigcap\limits_{{\bfm x}_1\in [-1,1]^{j_1}} S_{p'}(\delta_0+\langle \Delta_1,{\bfm x}_1\rangle ; \exists, {\Delta}_{2};\forall, {\Delta}_{3};\ldots; 
\forall, {\Delta}_{2n+1}; \exists, {\Delta}_{{2n+2}})
\label{eq:proofprop1}
\end{equation}

As 
$S_{p'}(\delta_0+\langle \Delta_1,{\bfm x}_1\rangle ; \exists, {\Delta}_{2};\forall, {\Delta}_{3};\ldots; 
\forall, {\Delta}_{2n+1}; \exists, {\Delta}_{{2n+2}})$
is eq\-ual to $T_{\bfm p'}(f')$, by Equation (\ref{defR}), with $f'({\bfm x}_2,\ldots,{\bfm x}_{2n+2})=f({\bfm x}_1,{\bfm x}_2,$ $\ldots,{\bfm x}_{2n+2})$  and ${\bfm p'}=(j_2,\ldots,j_{2n+2})$ is a partition of $p'=\sum\limits_{i=2}^{2n+2} j_i$, 
%
by the induction hypothesis $Q_{2n+1}$ applied to partition $\bfm p'$, each 
$S_{p'}(\delta_0+\langle \Delta_1,{\bfm x}_1\rangle ; \exists, {\Delta}_{2};\forall, {\Delta}_{3};\ldots; 
\forall, {\Delta}_{2n+1}; \exists, {\Delta}_{{2n+2}})$
is equal to:  

\begin{multline*}
\left[\delta_0+\langle \Delta_1,{\bfm x}_1\rangle+\sum\limits_{k=1}^{n} \left(||{\Delta'}_{2k}||-||{\Delta'}_{2k-1}||\right)-||{\Delta'}||_{2n+1} \right.,\\
\left.\delta_0+\langle \Delta_1,{\bfm x}_1\rangle+\sum\limits_{k=1}^{n} \left(||{\Delta'}_{2k-1}||-||{\Delta'}_{2k}||\right)+||{\Delta'}||_{2n+1}\right] 
\end{multline*}
\noindent 
if 
$||{\Delta'}_{2l}|| \leq  ||{\Delta'}_{2l+1}|| + \sum\limits_{k=l+1}^{n-1} \left(||{\Delta'}_{2k+1}||-||{\Delta'}_{2k}||\right)$ for $l=1,\ldots,n-1$ (otherwise, is equal to the empty set), with $\Delta'_i=\Delta_{i+1}$, for all $i=1,\ldots, 2n+1$, $\Delta'_1=\Delta_1$ and $\delta'_0=\delta_0$.


Replacing $\Delta'_i$ by $\Delta_{i+1}$, $\Delta'_1$ by $\Delta_1$ and $\delta'_0$ by $\delta_0$, we obtain the value: 

\begin{multline}
\left[\delta_0+\langle \Delta_1,{\bfm x}_1\rangle+\sum\limits_{k=1}^{n} \left(||{\Delta}_{2k+1}||-||{\Delta}_{2k}||\right)-||{\Delta}||_{2n+2} \right.,\\
\left.\delta_0+\langle \Delta_1,{\bfm x}_1\rangle+\sum\limits_{k=1}^{n} \left(||{\Delta}_{2k}||-||{\Delta}_{2k+1}||\right)+||{\Delta}||_{2n+2}\right] 
\label{eq:Sp}
\end{multline}
\noindent 
if 
$||{\Delta}_{2l+1}|| \leq  ||{\Delta}_{2l+2}|| + \sum\limits_{k=l+1}^{n-1} \left(||{\Delta}_{2k+2}||-||{\Delta}_{2k+1}||\right)$ for $l=1,\ldots,n-1$, 
otherwise is empty. 


Now, $$\delta_0-||{\Delta}_1|| \leq \delta_0+\langle {\Delta}_1,{\bfm x}_1\rangle \leq \delta_0+||{\Delta}_1||$$ 
\noindent each bound being reached by some ${\bfm x}_1$, 
and the intersection of the intervals of Equation (\ref{eq:Sp}), where ${\bfm x}_1$ varies over $[-1,1]^{j_1}$ is $R_{\bfm p}(f)$ by Equation (\ref{eq:proofprop1}):
\begin{multline*}
R_{\bfm p}(f)=
\left[\delta_0+||\Delta_1||+\sum\limits_{k=1}^{n} \left(||{\Delta}_{2k+1}||-||{\Delta}_{2k}||\right)-||{\Delta}||_{2n+2} \right.,\\
\left.\delta_0-||\Delta_1||+\sum\limits_{k=1}^{n} \left(||{\Delta}_{2k}||-||{\Delta}_{2k+1}||\right)+||{\Delta}||_{2n+2}\right] 
\end{multline*}
\noindent when $||{\Delta}_{2l+1}|| \leq  ||{\Delta}_{2l+2}|| + \sum\limits_{k=l+1}^{n-1} \left(||{\Delta}_{2k+2}||-||{\Delta}_{2k+1}||\right)$ for $l=1,\ldots,n-1$, which is 
exactly $P_{2n+2}$
since: 
\begin{equation*}
\delta_0+||{\Delta}_1|| + 
\sum\limits_{k=1}^{n} \left(||{\Delta}_{2k+1}||-||{\Delta}_{2k}||\right)-||{\Delta}||_{2n+2}
= \delta_0+\sum\limits_{k=1}^{n+1} \left(||{\Delta}_{2k-1}||-||{\Delta}_{2k}||\right)
\end{equation*}
\noindent and, 
\begin{equation*}
\delta_0-||\Delta_1||+\sum\limits_{k=1}^{n} \left(||{\Delta}_{2k}||-||{\Delta}_{2k+1}||\right)+||{\Delta}||_{2n+2}
\\= \sum\limits_{k=1}^{n+1} \left(||{\Delta}_{2k}||-||{\Delta}_{2k-1}||\right)
\end{equation*}
Finally, note, that $R_{\bfm p}(f)$ is not empty if and only if $||{\Delta}_{2l+1}|| \leq  ||{\Delta}_{2l+2}|| + \sum\limits_{k=l+1}^{n-1} \left(||{\Delta}_{2k+2}||-||{\Delta}_{2k+1}||\right)$ for $l=1,\ldots,n-1$ as before, which is equivalent to: 
$||{\Delta}_{2l-1}|| \leq  ||{\Delta}_{2l}|| + \sum\limits_{k=l}^{n} \left(||{\Delta}_{2k+2}||-||{\Delta}_{2k+1}||\right)$ for $l=2,\ldots,n-1$
\noindent and if the radius of $R_{\bfm p}(f)$ is positive i.e. 
$$
||\Delta_1||\leq \sum\limits_{k=1}^{n} \left(||{\Delta}_{2k}||-||{\Delta}_{2k+1}||\right)+||{\Delta}||_{2n+2}
$$
\noindent which is equivalent to: 
$$
||{\Delta}_{1}|| \leq  ||{\Delta}_{2}|| + \sum\limits_{k=1}^{n} \left(||{\Delta}_{2k+2}||-||{\Delta}_{2k+1}||\right)
$$
\noindent which is
$||{\Delta}_{2l-1}|| \leq  ||{\Delta}_{2l}|| + \sum\limits_{k=l}^{n} \left(||{\Delta}_{2k+2}||-||{\Delta}_{2k+1}||\right)$ for $l=1$. 
Overall, the non-vacuity condition for $R_{\bfm p}(f)$ amounts to: 
$$||{\Delta}_{2l-1}|| \leq  ||{\Delta}_{2l}|| + \sum\limits_{k=l}^{n} \left(||{\Delta}_{2k+2}||-||{\Delta}_{2k+1}||\right)$$ 
\noindent for $l=1,\ldots,n$.


\section{Difference between $\forall, \ \exists$ and $\exists, \ \forall$ in the linear case}

\begin{example}[Difference between $\forall, \exists$ and $\exists, \forall$]
\label{ex:affine}
Consider, for a function $f$ from $\R^2$ to $\R$ the sets $R_{\forall \exists}(f)=\{z \ | \ \forall x_2, \ \exists x_1, \ z=f(x_1,x_2)\}$ and
$R_{\exists \forall}(f)=\{z \ | \ \exists x_1, \ \forall x_2, \ z=f(x_1,x_2)\}$. 
In any case, $R_{\exists \forall}(f)\subseteq R_{\forall \exists}(f)$.
We consider the affine function $f(x_1,x_2)=a+bx_1+cx_2$. It is easy to see that 
\begin{itemize}
    \item $R_{\exists \forall}(f)=\emptyset$ if $c\neq 0$, since $z \in R_{\exists \forall}(f)$ implies $\exists x_1$ such that e.g. for $x_2=0$, $f(x_1,0)=z$, and for e.g. $x_2=1$, $f(x_1,1)=z$. Hence $a+bx_1=z=a+c+bx_1$ implying $c=0$. Conversely, if $c=0$, $R_{\exists \forall}(f)=[a-|b|,a+|b|]$, 
    \item $R_{\forall \exists}(f)=[a+|c|-|b|,a-|c|+|b|]$ if $|c| \leq |b|$, 
    otherwise it is empty. 
\end{itemize}
Hence in the linear case, either $c\neq 0$ and $R_{\exists \forall}(f)=\emptyset\neq R_{\forall \exists}(f)$ or
$R_{\exists \forall}(f)=R_{\forall \exists}(f)$. 
%
\end{example}

\section{Proof of Proposition \ref{prop:nonlin}}

\label{proof:nonlin}
Note first that, trivially, 
$$
f({x}_1,\ldots,{x}_{p})=f(0,\ldots,0)+
\sum\limits_{j=1}^{p}
    h^{x_1,\ldots,x_{j-1}}({x}_j)
$$

Consider now $z \in {\mathcal C}({\bfm O}_1,{\bfm I}_2,\ldots,0_{2n-1},{\bfm I}_{2n})$. 
Take any ${\bfm x}_1$ in $[-1,1]^{j_1}$. As ${\bfm O}_1$ is an outer-approximation of $range(h)\times\ldots \times range(h^{x_1,\ldots,x_{k_2}})$, 
$h({\bfm x}_1)=\mathop{\Pi}\limits_{j=1}^{k_2} h^{x_1,\ldots,x_{j-1}}({\bfm x}_1) \in {\bfm O}_1$, and there exists ${\alpha}_2 \in {\bfm I}_2$, such that
$\forall {\alpha}_3 \in {\bfm O}_3,\ \ldots, $ $\exists {\alpha}_{2n} \in {\bfm I}_{2n}$ with 
$z=h({\bfm x}_1)+\alpha_2+\ldots+\alpha_{2n}$. 

But each component of ${\bfm I}_2$ is an inner-approximation of range of $h^{x_{k_2+1},\ldots,x_{j-1}}$ for some $j \in [k_2+1,\ldots,k_3-1]$, so there exists ${\bfm x}_2 \in [-1,1]^{j_2}$ such that
$\alpha_2$ is the image of ${\bfm x}_2$ by $h^{{\bfm x}_1}=\mathop{\Pi}\limits_{j=k_2+1}^{k_3-1} h^{x_{k_2+1},\ldots,x_{j}}$. Therefore we have so far $\forall {\bfm x}_1$, 
$\exists {\bfm x}_2$ such that $\forall {\alpha}_3 \in {\bfm O}_3,\ \ldots, \exists {\alpha}_{2n} \in {\bfm I}_{2n}$ with 
$z=h({\bfm x}_1)+h^{{\bfm x}_1}({\bfm x}_2)+\alpha_3+\ldots+\alpha_{2n}$.

We carry on inductively to find that $z \in {\mathcal C}({\bfm O}_1,{\bfm I}_2,\ldots,0_{2n-1},{\bfm I}_{2n})$ is such 
that $\forall {\bfm x}_1$, 
$\exists {\bfm x}_2$, $\ldots$, $\exists {\bfm x}_{2n}$, 
$z=h({\bfm x}_1)+\ldots + h^{{\bfm x}_1,\ldots,{\bfm x}_{2n-1}}({\bfm x}_{2n})$, i.e. $z=f({\bfm x}_1,\ldots,{\bfm x}_{2n})-f(0,\ldots,0)$. Thus $z+f(0,\ldots,0) \in R_{\bf p}(f)$. 

The case of outer-approximations of $R_{\bf p}(f)$ is similar.

\section{Proof of Theorem \ref{thm:approx1D}}

\label{proof:approx1D}

Writing $A_l=[\underline{A}_l,\overline{A}_l]$
for $l=1,\ldots,p$, and ${\bfm A}_i=(A_{k_i+1},\ldots,A_{k_{i+1}})$, for $i=1,\ldots,2n$, we have
\begin{multline*}
{\mathcal C}({\bfm A}_1,\ldots,{\bfm A}_{2n})=
S_{p}\left(
\sum\limits_{j=1}^{p} \frac{\overline{A}_j+\underline{A}_j}{2}; \forall, 
\frac{\overline{\bfm A}_1-\underline{\bfm A}_1}{2}; \right. \\ \left. \exists, \frac{\overline{\bfm A}_2-\underline{\bfm A}_2}{2};\ldots;\exists, \frac{\overline{\bfm A}_{2n}-\underline{\bfm A}_{2n}}{2}\right)
\end{multline*}
\noindent where the arithmetic operations on vectors ${\bfm A}_j$ are taken componentwise. 
Thus, by Equation (\ref{defR0}), $C({\bfm A}_1,\ldots,{\bfm A}_{2n})=R_{\bfm p}(f)$ with $f({\bfm x}_1,\ldots,{\bfm x}_{2n})=\sum\limits_{j=1}^{p} \frac{\overline{A}_j+\underline{A}_j}{2}+\left\langle \frac{\overline{\bfm A}_1-\underline{\bfm A}_1}{2},{\bfm x}_1\right\rangle+\ldots \left\langle \frac{\overline{\bfm A}_{2n}-\underline{\bfm A}_{2n}}{2},{\bfm x}_{2n} \right\rangle$. 

Therefore, by Proposition \ref{prop:affine}, ${\mathcal C}({\bfm O}_1,{\bfm I}_2,\ldots,{\bfm O}_{2n-1},{\bfm I}_{2n})$ is equal to:

\begin{multline}
\sum\limits_{l=1}^{n}\left(\sum\limits_{j=k_{2l-1}+1}^{k_{2l}} \frac{\overline{O}_j+\underline{O}_j}{2}
+\sum\limits_{j=k_{2l}+1}^{k_{2l+1}}
\frac{\overline{I}_j+\underline{I}_j}{2}
\right)
+\\
\frac{1}{2}\left[\sum\limits_{k=1}^n \left(||\overline{\bfm O}_{2k-1}-\underline{\bfm O}_{2k-1}||-||\overline{\bfm I}_{2k}-\underline{\bfm I}_{2k}||\right)\right.,\\
\left.\sum\limits_{k=1}^n \left(||\overline{\bfm I}_{2k}-\underline{\bfm I}_{2k}||-||\overline{\bfm O}_{2k-1}-\underline{\bfm O}_{2k-1}||\right)\right] \label{eq:afterprop1}
\end{multline}
\noindent if $||\overline{\bfm O}_{2l-1}-\underline{\bfm O}_{2l-1}|| \leq  ||\overline{\bfm I}_{2l}-\underline{\bfm I}_{2l}|| + \sum\limits_{k=l+1}^n \left(||{\overline{\bfm I}}_{2k}-\underline{\bfm I}_{2k}||-||{\overline{\bfm O}}_{2k-1}-{\underline{\bfm O}}_{2k-1}||\right)$ for $l=1,\ldots,n$, otherwise, is empty. This is equivalent to
$
\sum\overline{\bfm O}_{2l-1}-\sum\underline{\bfm O}_{2l-1} \leq  
\sum\limits_{k=l}^n \sum\left(\overline{\bfm I}_{2k}-\underline{\bfm I}_{2k}\right)
-\sum\limits_{k=l+1}^{n} \sum\left(\overline{\bfm O}_{2k-1} - \underline{\bfm O}_{2k-1}\right)
$ 
for $l=1,\ldots,n$.

The left bound of the interval in Equation (\ref{eq:afterprop1}) above is equal to: 
\begin{multline*}
\sum\limits_{l=1}^{n}\left(\sum\limits_{j=k_{2l-1}+1}^{k_{2l}} \frac{\overline{O}_j+\underline{O}_j}{2}
+\sum\limits_{j=k_{2l}+1}^{k_{2l+1}}
\frac{\overline{I}_j+\underline{I}_j}{2}
\right)
+\\
\frac{1}{2}\sum\limits_{k=1}^n \left(||\overline{\bfm O}_{2k-1}-\underline{\bfm O}_{2k-1}||-||\overline{\bfm I}_{2k}-\underline{\bfm I}_{2k}||\right)
\end{multline*}
\noindent which is equal to: 
\begin{multline*}
\sum\limits_{l=1}^{n}\left(\sum\limits_{j=k_{2l-1}+1}^{k_{2l}} \frac{\overline{O}_j+\underline{O}_j}{2}
+\sum\limits_{j=k_{2l}+1}^{k_{2l+1}}
\frac{\overline{I}_j+\underline{I}_j}{2}
\right)
+\\
\sum\limits_{l=1}^n  \left(\sum\limits_{j=k_{2l-1}+1}^{k_{2l}}\frac{1}{2}||\overline{O}_{j}-\underline{O}_{j}||-\sum\limits_{j=k_{2l}+1}^{k_{2l+1}}
\frac{1}{2}||\overline{I}_{j}-\underline{\bfm I}_{j}||\right)\\
=\sum\limits_{l=1}^{n}\left(\sum\limits_{j=k_{2l-1}+1}^{k_{2l}} \left(
\frac{\overline{O}_j+\underline{O}_j}{2}
+\frac{1}{2}||\overline{O}_{j}-\underline{O}_{j}||\right)\right.\\
\left.
+\sum\limits_{j=k_{2l}+1}^{k_{2l+1}}
\left(\frac{\overline{I}_j+\underline{I}_j}{2}
-\frac{1}{2}||\overline{I}_{j}-\underline{\bfm I}_{j}||\right)
\right)
\end{multline*}
\noindent but: 
$$
\begin{array}{rcl}
\frac{\overline{O}_j+\underline{O}_j}{2}
+\frac{1}{2}||\overline{O}_{j}-\underline{O}_{j}|| & = & 
\overline{O}_j\\
\frac{\overline{I}_j+\underline{I}_j}{2}
-\frac{1}{2}||\overline{I}_{j}-\underline{\bfm I}_{j}|| & = &
\underline{I}_j
\end{array}
$$
Therefore, the left bound of the interval of Equation (\ref{eq:afterprop1}) is: 

$$
\sum\limits_{l=1}^{n}\left(\sum\limits_{j=k_{2l-1}+1}^{k_{2l}}
\overline{O}_j 
+\sum\limits_{j=k_{2l}+1}^{k_{2l+1}}
\underline{I}_j
\right)=\sum\limits_{l=1}^{n}
\left(\sum\overline{\bfm O}_{2l-1} 
+\sum
\underline{\bfm I}_{2l}
\right)
$$
The right bound of the interval of Equation (\ref{eq:afterprop1}) is treated similarly, and we find: 
$$
\sum\limits_{l=1}^{n}\left(\sum\limits_{j=k_{2l-1}+1}^{k_{2l}}
\underline{O}_j 
+\sum\limits_{j=k_{2l}+1}^{k_{2l+1}}
\overline{I}_j
\right)=\sum\limits_{l=1}^{n}
\left(\sum\underline{\bfm O}_{2l-1} 
+\sum
\overline{\bfm I}_{2l}\right)
$$
Similarly, ${\mathcal C}({\bfm O}_1,{\bfm I}_2,\ldots,{\bfm O}_{2n-1},{\bfm I}_{2n})$ is found, by echanging the roles of ${\bfm O}_k$ with ${\bfm I}_k$, to be equal to: 
$$
\left[\sum\limits_{l=1}^{n}
\left(\sum\overline{\bfm I}_{2l-1} 
+\sum
\underline{\bfm O}_{2l}
\right), 
\sum\limits_{l=1}^{n}
\left(\sum\underline{\bfm I}_{2l-1} 
+\sum
\overline{\bfm O}_{2l}\right)
\right]
$$

This, combined with (\ref{app}) of Proposition \ref{prop:nonlin} 
yields the result.







\section{Difference between $\forall, \ \exists$ and $\exists, \ \forall$ in the non-linear case}

\label{app:nonaffine}

\begin{example}[Difference between $\forall, \exists$ and $\exists, \forall$]
\label{ex:nonaffine}
Consider as in Example \ref{ex:affine}, for a function $f$ from $\R^2$ to $\R$ the sets $R_{\forall \exists}(f)=\{z \ | \ \forall x_2, \ \exists x_1, \ z=f(x_1,x_2)\}$ and
$R_{\exists \forall}(f)=\{z \ | \ \exists x_1, \ \forall x_2, \ z=f(x_1,x_2)\}$. 
In any case, $R_{\exists \forall}(f)\subseteq R_{\forall \exists}(f)$.

For non-linear $f$, the only difference with the linear case is that there may be isolated values for $x_1$
such that $f(x_1,x_2)$ does not depend on $x_2$, which gives a finite set of isolated points for $R_{\exists \forall}(f)$, while $R_{\forall \exists}(f)$ can be a strict superset of $R_{\exists \forall}(f)$. 

Take $f(x_1,x_2)=(x_1^2-1)x_2+x_1$ for $x_1 \in [-1,1]$ and $x_2 \in [-1,1]$. For $x_1=1$ and $x_1=-1$, $f(x_1,x_2)=x_1$, hence 1 and -1 belong to $R_{\exists \forall}(f)$, and $R_{\exists \forall}(f)=\{-1,1\}$. 
A study of $f$ reveals that 
$R_{\forall \exists}(f)=[-1,1]$, which is a strict superset of $R_{\exists \forall}(f)=\{-1,1\}$. 


Indeed, a careful study of $f$ reveals that:
\begin{itemize}
    \item for $x_2\in \left[-\frac{1}{2},\frac{1}{2}\right]$, $f(x_1,x_2)$ is monotonic, and has exact range $[f(-1,x_2)=-1,f(1,x_2)=1]$; when $x_2\in \left[\frac{1}{2},\infty\right[$, 
    \item $f(x_1,x_2)$ has exact range $\left[-\frac{1+4x_2^2}{4x_2},1\right]$, with $-\frac{1+4x_2^2}{4x_2}$ having itself exact range $\left[-\frac{5}{4},-1\right]$ ($f(.,x_2)$ is decreasing from -1 to $-\frac{1}{2x_2}$ with value $-\frac{1+4x_2^2}{4x_2}$ and then increasing to 1), 
    \item 
    $f(x_1,x_2)$ has exact range $\left[-1,-\frac{1+4x^2_2}{4x_2}\right]$ when $x_2 \in \left]-\infty,-\frac{1}{2}\right]$ ($f(.,x_2)$ is increasing from -1 to $-\frac{1}{2x_2}$ with value $-\frac{1+4x_2^2}{4x_2}$ which can range from 1 to $\frac{5}{4}$, and decreasing up to 1). 
    \end{itemize}
    Thus the intersection of the ranges of $f(.,x_2)$ for all $x_2$ is exactly $[-1,1]$, and by Lemma \ref{lemma:lem1}, we conclude that $R_{\forall \exists}(f)=[-1,1]$. 
\end{example}

\section{Example \ref{ex:7}}

\begin{example}
 \label{ex:7}
Consider function 
$g \ : \ \R^3 \rightarrow \R$ given by $$g(x_1,x_2,x_3)=\frac{x_1^2}{4}+(x_2+1)(x_3+2)+(x_3+3)^2.$$
On $[-1,1]^3$, $\nabla_1=|\frac{\partial g}{\partial x_1}|=|\frac{x_1}{2}| \in \left[0,\frac{1}{2}\right]$, 
$\nabla_2=|\frac{\partial g}{\partial x_2}|=|x_3+2| \in [1,3]$, 
$\nabla_3=|\frac{\partial g}{\partial x_3}|=|x_2+1+2(x_3+3)| \in [4,10]$, and $c=g(0,0,0)=11$. Therefore, we can compute the outer and inner approximations $O_i$ and $I_i$, $i=1,2,3$, of Remark \ref{rem:rk2}: 
$O_1=\left[-\frac{1}{2},\frac{1}{2}\right]$, $I_1=0$, $O_2=[-3,3]$, $I_2=[-1,1]$ and $O_3=[-10,10]$, $I_3=[-4,4]$. 
We can deduce an outer-approximation of the disturbance set:
\[    \{z\, | \; \exists x_1
\in [-1,1], \ \forall x_2 \in [-1,1], \ \exists x_3 \in [-1,1],  z=g(x_1,x_2,x_3)\}
\]
Let us first note that in order to apply Theorem 1 with $n=2$, we must introduce a dummy universally quantified first variable, which means that all indices above should be added 1. 
\noindent Then, by a "direct"  (adding 1 to variables indices) application of Theorem \ref{thm:approx1D}:
$$\arraycolsep=3pt
    \begin{array}{ccccccccccc}
    {[} & c & +\underline{O}_{1} &  
    +\overline{I}_{2} &
    +\underline{O}_{3}, &
    c & +\overline{O}_{1} &  
    +\underline{I}_{2} &
    +\overline{O}_{3} & {]} &  \\
    = {[} & 11 & -\frac{1}{2} & +1 & -10, &
    11 & +\frac{1}{2} & -1 & +10
    & {]} &= [1.5, 20.5]
    \end{array}$$ 
\noindent In comparison, the sampling based estimation is $[6.25, 16.25]$.

Now, as 
$\overline{I}_3 + \underline{O}_2 = 1 \geq \underline{I}_3 + \overline{O}_2=-1$,
we get by Theorem \ref{thm:approx1D} an inner approximation of the disturbance set for $g$:
$$\arraycolsep=3pt
    \begin{array}{ccccccccccc}
    {[} & c & +\underline{I}_{1} &  
    +\overline{O}_{2} &
    +\underline{I}_{3}, &
    c & +\overline{I}_{1} &  
    +\underline{O}_{2} &
    +\overline{I}_{3} & {]} & \\
    = {[} & 11 & 0 & +3 & -4, &
    11 & +0 & -3 & +4
    & {]} & =[10,12]
    \end{array}$$ 
\end{example}

\section{Calculation of the Jacobian for Example \ref{ex:Dubbins1D}}

\label{app:jacob}

The Jacobian of $\varphi$ with respect to $x_0$, $y_0$, $\theta_0$, $b_1$ and $a$, 
 $J_{i,x_0}=\frac{\partial \varphi_i}{\partial t}$, $J_{i,y_0}=\frac{\partial \varphi_i}{\partial t}$, $J_{i,\theta_0}=\frac{\partial \varphi_i}{\partial t}$, $J_{i,b_1}=\frac{\partial \varphi_i}{\partial t}$ and $J_{i,a}=\frac{\partial \varphi_i}{\partial t}$, for $i=x, y, \theta$ respectively, 
satisfies the following variational equation \cite{hscc2019}: 

\begin{center}
\begin{minipage}{4cm}
\begin{eqnarray*}
\dot{J}_{x,x_0} & = & -sin(\theta) J_{\theta,x_0}\\
\dot{J}_{x,y_0} & = & -sin(\theta) J_{\theta,y_0}\\
\dot{J}_{x,\theta_0} & = & -sin(\theta) J_{\theta,\theta_0} \\
\dot{J}_{x,b_1} & = & -sin(\theta) J_{\theta,b_1}+1\\
\dot{J}_{x,a} & = & -sin(\theta) J_{\theta,a}\\
\dot{J}_{y,x_0} & = & cos(\theta) J_{\theta,x_0}\\
\dot{J}_{y,y_0} & = & cos(\theta) J_{\theta,y_0}\\
\dot{J}_{y,\theta_0} & = & cos(\theta) J_{\theta,\theta_0}
\end{eqnarray*}
\end{minipage}
\begin{minipage}{4cm}
\begin{eqnarray*}
\dot{J}_{y,b_1} & = & cos(\theta) J_{\theta,b_1}\\
\dot{J}_{y,a} & = & cos(\theta) J_{\theta,a}\\
\dot{J}_{\theta,x_0} & = & 0\\
\dot{J}_{\theta,y_0} & = & 0\\
\dot{J}_{\theta,\theta_0} & = & 0\\
\dot{J}_{\theta,b_1} & = & 0\\
\dot{J}_{\theta,a} & = & 1\\
\end{eqnarray*}
\end{minipage}
\end{center}
\noindent with initial conditions $J_{i,x_0}=\delta_{i,x}$, $J_{i,y_0}=\delta_{i,y}$, $J_{i,\theta_0}=\delta_{i,\theta}$ where $\delta$ is the Kronecker symbol. Therefore, the only non-null entries of the Jacobian of $\varphi$ are: 
$J_{x,x_0}=1$, $J_{x,b_1}=t$, $J_{y,y_0}=1$, $J_{\theta,\theta_0}=1$, $J_{\theta,a}=t$, and also $J_{x,\theta_0}$, $J_{x,a}$, $J_{y,\theta_0}$ and $J_{y,a}$ are given by the ODEs:
$$\begin{array}{lcll}
\dot{J}_{x,\theta_0} & = & -sin(\theta) & \mbox{with $\dot{J}_{x,\theta_0}=0$ at time $0$} \\
\dot{J}_{x,a} & = & -sin(\theta) t & \mbox{with $\dot{J}_{x,a}=0$ at time $0$} \\
\dot{J}_{y,\theta_0} & = & cos(\theta) & \mbox{with $\dot{J}_{y,\theta_0}=0$ at time $0$} \\
\dot{J}_{y,a} & = & cos(\theta)t & \mbox{with $\dot{J}_{y,a}=0$ at time $0$} \\
\end{array}
$$
In order to find an over-approximation of these entries of the Jacobian, we use here a simple mean-value theorem, given that $\theta(t)\in [-0.015,0.015]$ for $t\in [0,0.15]$:
$$
\begin{array}{rcl}
J_{x,\theta_0} & = & -sin([-0.015,0.015])t \\
& \in & [-1.309 \ 10^{-4}, 1.309 \ 10^{-4}]\\
J_{x,a} & = & -[0,0.5]sin([-0.015,0.015])t \\
& \in & [-6.545 \ 10^{-5},6.545 \ 10^{-5}]\\
J_{y,\theta_0} & = & cos([-0.015,0.015])t \\
& \in & [0,0.5] \\
J_{y,a} & = & [0,0.5] cos([-0.015,0.015])t \\
& \in & [0,0.25] \\
\end{array}
$$

By Remark \ref{rem:rk2}, this gives the following inner and outer approximations for all parameters $x_0$, $y_0$, $\theta_0$, $a$ and $b_1$, and all components $x$, $y$ and $\theta$ of $\varphi$: 
\begin{itemize}
    \item $I_{x,a}=0$, $O_{x,a}=[-6.545 \ 10^{-7},6.545 \ 10^{-7}]$, $I_{x,x_0}=O_{x,x_0}=[-0.1,0.1]$, $I_{x,\theta_0}=0$, $O_{x,\theta_0}=[-1.309 \ 10^{-6},1.309 \ 10^{-6}]$, $I_{x,b_1}=0$, $O_{x,b_1}=[-0.005,0.005]$, 
    \item $I_{y,a}=0$, $O_{y,a}=[-0,0025,$ $0.0025]$, 
$I_{y,y_0}=O_{y,y_0}=[-0.1,0.1]$, $I_{y,\theta_0}=0$, $O_{y,\theta_0}=[-0,005,0.005]$, 
\item $I_{\theta,\theta_0}=O_{\theta,\theta_0}=[-0.01,0.01]$, $I_{\theta,a}=0$, $O_{\theta,a}=[0,0.005]$, 
\end{itemize}

We note also that $\frac{\partial x}{\partial t}=cos(\theta)+b_1\in [0.989999965,1.01]$ thus $I_{x,t}=[0,0.49499$ $9982]$, $O_{x,t}=[0,0.505]$, similarly, $I_{y,t}=0$, $O_{y,t}=[-sin(0.015)/2,sin(0.015)/2]=[-1.309 \ 10^{-4},1.309 \ 10^{-4}]$ and $I_{\theta,t}=0$, $O_{\theta,t}=[-0.005,0.005]$.

\section{Computation of components $y$ and $\theta$ for Example \ref{ex:Dubbins1D}}

\label{app:ytheta}

Inner-approximation for the $y$ component of $\varphi$ is computed similarly as for $x$, its lower bound is: 

$$\arraycolsep=2pt
    \begin{array}{ccccccc}
    y_c & +\underline{I}_{y,a} &  
    +\underline{I}_{y,x_0} &
    +\underline{I}_{y,y_0} &
    +\underline{I}_{y,\theta_0} &
    +\overline{O}_{y,b_1} & 
    +\underline{I}_{y,t} \\ 
= 0 & +0 & +0 & -0.1 & +0 & +0 & +0
    \end{array}$$ 
    \noindent which is equal to -0.1, and its upper bound: 
    $$\arraycolsep=2pt
    \begin{array}{ccccccc}
    y_c & +\overline{I}_{y,a} &  
    +\overline{I}_{y,x_0} &
    +\overline{I}_{y,y_0} &
    +\overline{I}_{y,\theta_0} &
    +\underline{O}_{y,b_1} & 
    +\overline{I}_{y,t} \\
=0 & +0 & +0 & +0.1 & +0 & +0 & +0
    \end{array}$$
    \noindent which is equal to 0.1.
\noindent Therefore the inner-approximation for $y$ is equal to $[-0.1,0.1]$.


Outer-approximation for the $y$ component of $\varphi$ has as lower bound: $$\arraycolsep=2pt
    \begin{array}{ccccccc}
    y_c & +\underline{O}_{y,a} &  
    +\underline{O}_{y,x_0} &
    +\underline{O}_{y,y_0} &
    +\underline{O}_{y,\theta_0} &
    +\overline{I}_{y,b_1} & 
    +\underline{O}_{y,t} \\ 
    = 0 & -0,0025 & -0 & -0.1 & -0,005 & +0 & -1.309 \ 10^{-4}
    \end{array}$$ 
    \noindent which is equal to 0.1076309, and its upper bound: 
    $$\arraycolsep=2pt
    \begin{array}{ccccccc}
    y_c & +\overline{O}_{y,a} &  
    +\overline{O}_{y,x_0} &
    +\overline{O}_{y,y_0} &
    +\overline{O}_{y,\theta_0} &
    +\underline{I}_{y,b_1} & 
    +\overline{O}_{y,t} \\
    = 0 & +0.0025 & +0 & +0.1 & +0,005 & -0 & +1.309 \ 10^{-4}
    \end{array}$$
    \noindent which is equal to 0.1076309.
\noindent Therefore the outer-approximation for $y$ is equal to $[0.1076309,0.1076309]$.

Inner-approximation for the $\theta$ component of $\varphi$ is computed similarly, its lower bound is: 

$$\arraycolsep=2pt
    \begin{array}{ccccccc}
    \theta_c & +\underline{I}_{\theta,a} &  
    +\underline{I}_{\theta,x_0} &
    +\underline{I}_{\theta,y_0} &
    +\underline{I}_{\theta,\theta_0} &
    +\overline{O}_{\theta,b_1} & 
    +\underline{I}_{\theta,t} \\ 
= 0 & +0 & +0 & +0 & -0.01 & +0 & +0
    \end{array}$$ 
    \noindent which is equal to -0.01, and its upper bound: 
    $$\arraycolsep=2pt
    \begin{array}{ccccccc}
    \theta_c & +\overline{I}_{\theta,a} &  
    +\overline{I}_{\theta,x_0} &
    +\overline{I}_{\theta,y_0} &
    +\overline{I}_{\theta,\theta_0} &
    +\underline{O}_{\theta,b_1} & 
    +\overline{I}_{\theta,t} \\
= 0 & +0 & +0 & +0 & +0.01 & +0 & +0
    \end{array}$$
    \noindent which is equal to 0.01.
\noindent Therefore the inner-approximation for $\theta$ is equal to $[-0.01,0.01]$.


Outer-approximation for the $\theta$ component of $\varphi$ has as lower bound: $$\arraycolsep=2pt
    \begin{array}{ccccccc}
    \theta_c & +\underline{O}_{\theta,a} &  
    +\underline{O}_{\theta,x_0} &
    +\underline{O}_{\theta,y_0} &
    +\underline{O}_{\theta,\theta_0} &
    +\overline{I}_{\theta,b_1} & 
    +\underline{O}_{\theta,t} \\ 
    = 0 & +0 & +0 & +0 & -0.01 & +0 & -0.005
    \end{array}$$ 
    \noindent which is equal to -0.02, and its upper bound: 
    $$\arraycolsep=2pt
    \begin{array}{ccccccc}
    \theta_c & +\overline{O}_{\theta,a} &  
    +\overline{O}_{\theta,x_0} &
    +\overline{O}_{\theta,y_0} &
    +\overline{O}_{\theta,\theta_0} &
    +\underline{I}_{\theta,b_1} & 
    +\overline{O}_{\theta,t} \\
    = 0 & +0.005 & +0 & +0 & +0.01 & +0 & +0.005 
    \end{array}$$
    \noindent which is equal to 0.02.
\noindent Therefore the outer-approximation for $\theta$ is equal to $[-0.02,0.02]$.


\section{Proof of Theorem \ref{thm:approxnD}}

\label{proof:approxnD}


The principle is the same as in the case of one alternation of, for all and there exists quantifiers, treated in \cite{lcss20}. 

For each $i \in \{1,\ldots,n\}$, function $\pi^i$ associates to each $x_j$ for $j \in \{k_{2i}+1, \ldots, k_{2{i+1}}\}$ the index $l \in \{1, \ldots, m\}$ of the unique output component of the function in which it will be existentially quantified.

First suppose  $\pi^i$ are jointly surjective (i.e. the union of their image is $\{1,\ldots,$ $m\}$).
For each of the $m$ quantified problems for $z_j$ o Theorem~\ref{thm:approxnD}, for all $i=1,\ldots,n$, and for all $l^i_j \in J_{E,z_j}^i$ 
we can associate 
the  continuous selection $$g_{l^i_j}\left(z_j,(x_p)_{p \in \{1,\ldots,k_{2i-1}\}\cup J_{A,z_j}^i}\right)$$ 
\noindent by \cite{gold3}, since $f$ is elementary. 
For a given $(z_1,\ldots,z_m) \in \z$, let us define the continuous map  
$g$ that associates to each $(x_{1},\ldots,x_{u})\in \x$, 
$$\left((g_{l^1_j})_{j\in \{1,\ldots,m\}},\ldots,(g_{l^n_j})_{j\in \{1,\ldots,m\}}\right)$$ \noindent which can be completed adding identities on the components which are not defined, to be in value in 
in $\x \subseteq \R^m$. 
By Brouwer fixed point theorem, $\forall z \in \z$, 
there is a fixed point $x^z \in \x$ of $g$, which thus satisfies 
$z=f(x^z)$ as the $\pi^i$s are jointly surjective. We observe that the fixed point obtained does depend on the quantified variables, in the same order than in the original quantified problem $R_{\bfm p}(f)$. 

Finally, if the $\pi^i$s are not jointly surjective,  there exist $z_i$ in which no input variable is existentially quantified. The corresponding under-approximation will be empty or reduced to a point and the previous proof still holds on the other components.  

\section{Example \ref{ex:linearjoinrange} detailed}

\label{sec:appnew}

There are several possible quantified formulas giving a 2D inner-approxi\-mation for Example \ref{ex:linearjoinrange}.

One of them is, for all $z_1$ and $z_2$: 
\begin{align}
& \framebox{$\exists x_1$}, \ \forall x_2, \ \forall x_3, \ \framebox{$\exists x_4$}, \ z_1  =  f_1(x_1,x_2,x_3,x_4) \label{eqn:z1} \\
& \forall x_1, \ \forall x_2, \ \forall x_4, \ \framebox{$\exists x_3$}, \ z_2  =  f_2(x_1,x_2,x_3,x_4) \label{eqn:z2}
\end{align}
\noindent where we indicated above the existential quantifiers for each group $\forall \exists$ within a framed box.
By Theorem \ref{thm:approx1D}, we get an empty set for $z_1$ as defined by
 Equation (\ref{eqn:z1}), 
since the constraints of Proposition \ref{prop:affine} are not satisfied: the contribution of the existentially quantified $x_4$ is $[-1,1]$ whereas the universally quantified $x_2$ and $x_3$ account for $[-4,4]$, which thus cannot be fully compensated. 

Similarly
Equation (\ref{eqn:z2})
     yields an empty inner-approximation since it does not satisfy the constraints of Proposition \ref{prop:affine}, the contribution of $x_3$ being too small to counteract the contribution of $x_1$, $x_2$ and $x_4$. 

Another possibility is to interpret the following quantified formulas, for all $z_1$ and $z_2$: 
\begin{align}
& \framebox{$\exists x_1$}, \ \forall x_2, \ \forall x_4, \ \framebox{$\exists x_3$}, \ z_1  =  f_1(x_1,x_2,x_3,x_4) \label{eqn:z1ter} \\
& \forall x_1, \ \forall x_2, \ \forall x_3, \ \framebox{$\exists x_4$}, \ z_2  =  f_2(x_1,x_2,x_3,x_4) \label{eqn:z2ter}
\end{align}
This time, the conditions of Proposition  \ref{prop:affine} for obtaining a non-empty inner-approxima\-tion are met and we get 
for Equation (\ref{eqn:z1ter}):
$$\arraycolsep=1pt
    \begin{array}{ccccccccc}
    {[} &z_1^c & -||\Delta_{x_1}|| & +||\Delta_{x_2,x_4}|| & -||\Delta_{x_3}|| , &z_1^c & +||\Delta_{x_1}|| & -||\Delta_{x_2,x_4}|| & +||\Delta_{x_3}|| ] \\
    ={[}&2 & -2 &+1 +1 & -3, & 2 & +2 & -1 -1 & +3]
    \end{array}$$ 
    \noindent 
    which is equal to $[-1,5]$, and 
for Equation (\ref{eqn:z2ter}):
$$\arraycolsep=1pt\begin{array}{ccccccc}
    {[} &z_2^c & +||\Delta_{x_1,x_2,x_4}|| & -||\Delta_{x_3}|| , &z_1^c & -||\Delta_{x_1,x_2,x_4}|| & +||\Delta_{x_3}|| ] \\
    ={[}&-1 & +1 +1+1 &-5, & -1 & -1-1-1 & +5]
    \end{array}$$ 
    \noindent 
which is equal to $[-3,1]$. 
Hence $[-1,5]\times[-3,1]$ is in the set $R_{\exists \forall \exists}(f)$. 



Finally, there are two other possibilities for finding joint inner-approximations of $R_{\exists \forall \exists} (f)$ of Equation (\ref{eqn:disturbex}): 
\begin{align}
& \forall x_1, \ \forall x_2, \ \forall x_3, \ \framebox{$\exists x_4$}, \ z_1  =  f_1(x_1,x_2,x_3,x_4) \label{eqn:z1prime} \\
& \framebox{$\exists x_1$}, \ \forall x_2, \ \forall x_4, \ \framebox{$\exists x_3$}, \ z_2  =  f_2(x_1,x_2,x_3,x_4) \label{eqn:z2prime}
\end{align}

For the same reason as for Equations (\ref{eqn:z1}) and (\ref{eqn:z2}), Equations (\ref{eqn:z1prime}) and (\ref{eqn:z2prime}) yield empty inner-approximations since the contribution to it of the last existentially quantified variables is the same in both cases (e.g. [-1,1] for $x_4$ in Equations (\ref{eqn:z1}) and (\ref{eqn:z1prime})), and is not big enough to compensate for the next universally quantified variables (e.g. [3,-3] for variable $x_3$ in Equations (\ref{eqn:z1}) and (\ref{eqn:z1prime})). 
Similarly for the last choice: 
\begin{align}
& \forall x_1, \ \forall x_2, \ \forall x_4, \ \framebox{$\exists x_3$}, \ z_1  =  f_1(x_1,x_2,x_3,x_4) \label{eqn:z1terb} \\
& \framebox{$\exists x_1$}, \ \forall x_2, \ \forall x_3, \ \framebox{$\exists x_4$}, \ z_2  =  f_2(x_1,x_2,x_3,x_4) \label{eqn:z2terb}
\end{align}
\noindent for which the Equation (\ref{eqn:z1terb}) gives an empty set, and Equation (\ref{eqn:z2terb}) gives 
    $[-1-1+1+1-5.
    -1+1-1-1+5
    ]$
which is equal to $[-3,3]$. Overall, this choice of alternation of quantifiers allowed by Theorem \ref{thm:approxnD} gives an empty joint inner-approximation for the quantified problem $R_{\exists \forall \exists}(f)$ of Equation (\ref{eqn:disturbex}).

\section{Example Motion-$n$}

\label{bench:motionplanning}

We consider the Dubbins vehicle of Example \ref{ex:dubtaylor}, and compute the $x$ component at time $nT$, $x_n$, with $T=0.5s$, given a piecewise constant control $a_i$ on each of the control periods $[(i-1)T,iT]$, which is given by the following function below, after integration of the dynamics: 
$$
x_n = x_0  + \sum\limits_{k=1}^n \frac{1}{a_k}\left( sin\left(\theta_0+T\sum\limits_{l=1}^n a_l\right)-sin\left(\theta_0+T\sum\limits_{l=1}^{n-1} a_l\right)\right)+T\sum\limits_{k=1}^n b_k
$$
\noindent and we consider the following motion planning problem (up to a small relaxation $\delta$): 
\begin{equation*}
\exists x_0, \ \exists \theta_0, \
\exists a_1, \ \forall b_1, \
\ldots, \
\exists a_n, \ \forall b_n, \ \exists \delta, \
x = x_n(x_0,\theta_0,a_1,b_1,\ldots,a_n,b_n)+\delta
\end{equation*}

\end{document}